\documentclass[a4paper,11pt]{article}

\usepackage{graphicx}
\usepackage{amsmath,amsfonts}
\usepackage{amssymb}
\usepackage{color}
\usepackage{algorithm}
\usepackage{algpseudocode}
\usepackage{booktabs}
\usepackage{url}
\usepackage[subrefformat=parens]{subcaption}
\newenvironment{proof}[1][Proof]{\textbf{#1.} }{\hfill$\square$}
\newtheorem{theorem}{Theorem}
\newtheorem{lemma}{Lemma}
\newtheorem{remark}{Remark}
\newtheorem{ass}{Assumption}

\newtheorem{prob}{Problem}

\newcommand{\A}{{\mathcal A}}
\newcommand{\B}{{\mathcal B}}

\newcommand{\C}{{\mathcal C}}
\newcommand{\D}{{\mathcal D}}

\newcommand{\Hm}{{\mathbf H}}

\begin{document}

\title{Cyclic Reformulation-Based Identification and Polytopic Uncertainty Modeling for Multirate Systems\thanks{Corresponding author: H.~Okajima. Email: okajima@cs.kumamoto-u.ac.jp}}

\author{Hiroshi Okajima, Kakeru Ono\\
{\small Faculty of Advanced Science and Technology, Kumamoto University, Japan}}

\date{}

\maketitle

\noindent\textbf{Keywords:} Multirate Systems; System Identification; Cyclic Reformulation; Centroid Model; Polytopic Uncertainty; Sensor Fusion.

\begin{abstract}
Modern control systems increasingly rely on heterogeneous sensors operating at different sampling rates, where intermittently missing outputs pose fundamental challenges for system identification. This paper proposes a non-iterative, control-oriented identification method for multirate systems based on cyclic reformulation. The method transforms multirate data into an expanded time-invariant representation and yields $M$ parameter sets from a single input-output dataset, where $M$ is the least common multiple of the sensor sampling periods. These parameter sets are used in two complementary ways: their centroid serves as a noise-reduced nominal model, while their convex hull gives a polytopic uncertainty model compatible with vertex-based LMI robust control design. Building on the noise-free structural recovery theorem of the authors' preceding work, which is restated here in the notation of the present paper, the present paper newly introduces the centroid and polytopic models derived from the $M$ parameter sets; finite-noise behavior is treated as an empirical observation and is evaluated numerically. Numerical simulations support both models: an illustrative SISO example shows that the centroid attains higher validation FIT than the best individual vertex and substantially outperforms an interpolation-based baseline, while a MIMO multirate sensing example confirms, in line with the LTI counterpart, that the constructed polytope contains models whose validation FIT exceeds 95\% on average even at the highest tested noise level. The polytope is interpreted cautiously, with finite-noise behavior assessed through output-level validation statistics rather than realization-dependent matrix-coordinate distances. The proposed framework therefore links multirate system identification with robust-control-oriented uncertainty modeling without iterative EM-type optimization.
\end{abstract}

\section{Introduction}

System identification plays a crucial role in model-based control design, where accurate mathematical models directly determine control performance~\cite{id0,id1,id2,katayama}. Recent advances have extended identification techniques to nonlinear systems~\cite{non1}, periodic systems~\cite{lpv,lptv}, and various practical applications.

In modern control systems, feedback control using multiple sensors is common. When these sensors are based on different physical principles, they often operate at different sampling rates, as each sensing principle has its own achievable rate under hardware and communication constraints. Such multirate sensing environments arise frequently in mobile robot control~\cite{mobile1,mobile2} and sampled-data control systems~\cite{chen1995sampled}. Since multirate systems can be viewed as systems with periodically unavailable output measurements, their identification is significantly more challenging than that of standard linear time-invariant systems.

Several approaches have been proposed for multirate system identification. Subspace-based identification for non-uniformly sampled multirate data was developed in~\cite{li2006subspace}, and Kalman filter design for fault detection in such systems was addressed in~\cite{li2008kalman}. Lifting techniques~\cite{rev1-1,li2001fast} enable fast-rate model identification but face difficulties in recovering original system parameters due to variable products in the expanded state space. Modified subspace methods for periodically non-uniformly sampled systems were also proposed in~\cite{ding2014modified}. Stochastic gradient approaches offer another direction: partially coupled algorithms for non-uniformly sampled systems were developed in~\cite{ding2010partially}, avoiding the need to estimate the full parameter vector simultaneously.

An alternative approach treats missing outputs as latent variables and applies the Expectation-Maximization (EM) algorithm~\cite{dempster1977}, which has been applied to multirate processes with random delays~\cite{xie2013fir}, missing output data in linear parameter-varying systems~\cite{xiong2014multiple}, and state-space models with time delays~\cite{chen2019state}. Separately, robust maximum-likelihood methods~\cite{gibson2005robust} provide improved convergence properties for multivariable systems through direct likelihood optimization. While both classes of iterative methods can yield statistically efficient estimates under appropriate assumptions, they generally require careful initialization and may encounter local convergence issues in practice.

For periodically time-varying systems, cyclic reformulation~\cite{cyc1,cyc2} provides a systematic approach by transforming them into equivalent time-invariant forms. Building on this foundation, the authors developed a cyclic reformulation-based identification method for linear periodically time-varying (LPTV) systems~\cite{okajima1}, which recovers the true system parameters exactly (up to coordinate transformation) under noise-free conditions. This method was then extended to multirate sensing environments~\cite{okajima2}, where the sensing timing of each output channel is characterized by a selection matrix, so that the multirate system is cast as a periodically time-varying system with period $M$ (the least common multiple of the sensor sampling periods) and identified via cyclic reformulation. This formulation enables exact, non-iterative identification that directly recovers the original system matrices, in contrast to the lifting- and EM-based methods discussed above. In~\cite{okajima2}, the primary focus was on noise-free identification, where the $M$ parameter sets obtained from cyclic reformulation were shown to coincide exactly, yielding a single model. Under noisy conditions, numerical experiments confirmed that each parameter set exhibits different perturbations, but the systematic treatment of this discrepancy was not addressed.

This paper addresses this gap by exploiting the noise-induced discrepancy among the $M$ parameter sets as a source of uncertainty information. Expressed in a common state coordinate obtained from a single identification run, the $M$ sets are used in two complementary ways: their average defines a centroid model that serves as a noise-reduced nominal model, while their convex hull defines a polytopic uncertainty model for robust control design. Since the vertices arise from the dispersion of the estimates rather than from guaranteed set-membership bounds, the polytope is understood as a data-derived uncertainty description; its construction extends the linear time-invariant (LTI) technique of~\cite{okajima3} to multirate systems, with the sampling structure fixing the number of vertices at $N=M$. To the best of the authors' knowledge, no existing method simultaneously provides a nominal model and a polytopic uncertainty description from a single multirate experiment without iterative computation, and the resulting polytope is directly compatible with vertex-based LMI robust control design~\cite{robust1,robust3,robust4,kothare1996robust}.

The present paper thus complements the authors' preceding studies~\cite{okajima1,okajima2,okajima3}: the first two established exact or nominal identification for LPTV and multirate systems, and the third constructed a polytopic uncertainty model in the LTI setting; the present paper supplies the multirate counterpart of that polytopic modeling pipeline. In doing so, it clearly separates the exact cyclic-structure recovery---a restatement, in the present notation, of the noise-free result of~\cite{okajima2}---from the finite-noise modeling step, in which the centroid and polytopic models are newly constructed and assessed through output-level validation statistics.

This paper is organized as follows. Section~\ref{sec0} presents mathematical preliminaries on cyclic reformulation and polytopic uncertainty models. Section~\ref{sec2} formulates multirate systems as periodically time-varying systems. Section~\ref{sec3} develops the cyclic reformulation for multirate systems. Section~\ref{sec4} proposes the system identification algorithm, including the centroid and polytopic uncertainty models. Section~\ref{sec5} demonstrates the effectiveness through numerical simulations. Section~\ref{sec_disc} discusses the results and limitations. Section~\ref{sec6} concludes the paper.

\section{Mathematical Preliminaries}\label{sec0}

\subsection{Cyclic Reformulation of Periodically Time-varying Systems} \label{sec01}
This section explains cyclic reformulation, a time-invariant system expression method for periodically time-varying systems, as a preliminary step~\cite{cyc1,cyc2}. Consider a discrete-time linear periodically time-varying system with period $M$ (where $M$ is a positive integer) represented by $(A_k,B_k,C_k,D_k)$:
\begin{eqnarray} 
x(k+1)&=&A_k x(k)+B_k u(k)\label{siki1-jizen}\\
y(k)&=&C_k x(k)+ D_k u(k)\label{siki12-jizen}
\end{eqnarray}

For a positive integer $q$, define the cyclic shift matrix $\check{S}_q \in \mathbb{R}^{Mq \times Mq}$ as
\begin{equation}
     \check{S}_q = \left[\begin{array}{ccccc}
           \mathbf{0}_{q \times q} & I_q &  \mathbf{0}_{q \times q} & \cdots &  \mathbf{0}_{q \times q}\\
           \mathbf{0}_{q \times q} &  \mathbf{0}_{q \times q} & I_q & \ddots & \vdots\\
          \vdots & \ddots & \ddots & \ddots &  \mathbf{0}_{q \times q}\\
           \mathbf{0}_{q \times q} & \ddots & \ddots & \ddots & I_q\\
          I_q &  \mathbf{0}_{q \times q} & \cdots & \cdots &  \mathbf{0}_{q \times q} 
     \end{array} \right].   \label{sq}
\end{equation}
The matrix $\check{S}_q$ is nonsingular and satisfies $\check{S}_q^M = I_{Mq}$, i.e., $\check{S}_q$ is an $M$-th root of the identity matrix; this property is used in subsequent derivations involving $\check{S}_q^{Mi}$ for integer $i$. Its inverse $\check{S}_q^{-1}$ shifts block components in the opposite direction. For any block diagonal matrix $E \in \mathbb{R}^{Mq\times Mq}$ with $q\times q$ block elements $E_i$, $\check{S}_q^{-1}E \check{S}_q$ is also block diagonal, with the individual blocks shifted by one position relative to $E$.

Let $e_r \in \mathbb{R}^M$ ($r = 0, \ldots, M-1$) denote the $(r+1)$-th standard basis vector of $\mathbb{R}^M$, and let $\otimes$ denote the Kronecker product. The cycled input signal $\check{u}(k) \in \mathbb{R}^{Mm}$ is then constructed by
\begin{eqnarray}
\check{u}(k) = e_{k \bmod M} \otimes u(k), \label{checku}
\end{eqnarray}
so that $\check{u}(k)$ has the unique non-zero sub-vector $u(k)$ at the block position $k \bmod M$, while the rest are zeros, and the non-zero block cyclically shifts along the column blocks.

The cyclic reformulation of the $M$-periodic system (\ref{siki1-jizen}), (\ref{siki12-jizen}) is given by
\begin{equation}
 \label{pre0-eqcyclic}
     \begin{array}{rcl}
          \check{x}(k+1) & = & \check{A}\check{x}(k) + \check{B}\check{u}(k)\\
          \check{y}(k) & = & \check{C}\check{x}(k) + \check{D}\check{u}(k),
     \end{array}
\end{equation}
where
\begin{eqnarray}
& \check{A} = \check{S}_n^{-1} \mathrm{diag}(A_0, \ldots, A_{M-1}), \quad \check{B} = \check{S}_n^{-1} \mathrm{diag}(B_0, \ldots, B_{M-1}), & \nonumber \\
& \check{C} = \mathrm{diag}(C_0, \ldots, C_{M-1}), \quad \check{D} = \mathrm{diag}(D_0, \ldots, D_{M-1}). & \label{pre0-checkABCD}
\end{eqnarray}
The dimensions are $\check{A}\in \mathbb{R}^{Mn\times Mn}$, $\check{B}\in \mathbb{R}^{Mn\times Mm}$, $\check{C}\in \mathbb{R}^{Ml\times Mn}$, $\check{D}\in \mathbb{R}^{Ml\times Mm}$, $\check{x}(k)\in \mathbb{R}^{Mn}$, and $\check{y}(k)\in \mathbb{R}^{Ml}$. Premultiplication by $\check{S}_n^{-1}$ shifts the block diagonal entries into the subdiagonal positions characteristic of cyclic reformulation, so that $\check{A}$ and $\check{B}$ are cyclic matrices while $\check{C}$ and $\check{D}$ retain block diagonal form. Although the cyclic reformulated system (\ref{pre0-eqcyclic}) is itself linear time-invariant, direct step-by-step computation using (\ref{pre0-checkABCD}) and the cycled input (\ref{checku}) shows that, for every $k$, the cycled state $\check{x}(k)$ and cycled output $\check{y}(k)$ each have a unique non-zero sub-vector at block position $k \bmod M$ that coincides exactly with the state $x(k)$ and output $y(k)$ of the original periodically time-varying system (\ref{siki1-jizen}), (\ref{siki12-jizen}).

Furthermore, in~\cite{okajima1}, the properties of Markov parameters for the cyclic reformulated system were characterized as follows. Markov parameters $\check{H}(i)$ are coefficients of the impulse response of system (\ref{pre0-eqcyclic}) and are given by
\begin{eqnarray}
\check H(i) = \begin{cases}\check D,&i = 0 \\ \check C \check A^{i-1} \check B, & i=1,2,\cdots \end{cases}\label{markov}
\end{eqnarray}
Using the cyclic shift matrix $\check{S}_q$, the following important lemma for Markov parameters $\check H(i)$ of the cycled system (\ref{pre0-eqcyclic}) is obtained~\cite{okajima1}:
\begin{lemma}\label{lemma3}
Consider the following $Ml\times Mm$ matrix:
\begin{eqnarray}
 \check S_l^i \check H(i+j)\check S_m^{j}
\end{eqnarray}
Then, $\check S_l^i \check H(i+j)\check S_m^j$ is a block diagonal matrix with $l\times m$ block elements, where all off-diagonal blocks are zero matrices, for any non-negative integers $i, j$. In addition, for any non-negative integer $i$ and any positive integer $j$, the following matrix:
\begin{eqnarray}
 \check S_l^i \check H(i+j)\check S_m^{j-1}
\end{eqnarray}
is a cyclic matrix. \hfill $\Box$
\end{lemma}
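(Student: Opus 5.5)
The plan is to track block positions through the product $\check C\check A^{i+j-1}\check B$ by recording which block position each factor's nonzero blocks occupy. I will use a permutation-type bookkeeping. Call a block matrix with $q\times p$ blocks \emph{$s$-shifted} if its only nonzero blocks are at positions $(r,\,r+s \bmod M)$ for $r=0,\dots,M-1$. Block diagonal means $0$-shifted, and a cyclic matrix in the sense of (\ref{pre0-checkABCD}), i.e.\ of the form $\check S^{-1}\times$(block diagonal), is $(-1)$-shifted. The elementary fact I need is that if $X$ is $s$-shifted and $Y$ is $t$-shifted with compatible block sizes, then $XY$ is $(s+t)$-shifted. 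This holds because $(XY)_{r,c}=\sum_p X_{r,p}Y_{p,c}$ is nonzero only when $p=r+s$ and $c=p+t$. In particular $\check S_q$ is $(+1)$-shifted, $\check S_q^{-1}$ is $(-1)$-shifted, and powers add shifts modulo $M$.

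Next I classify the factors. $\check C$ and $\check D$ are $0$-shifted. $\check A=\check S_n^{-1}\mathrm{diag}(A_k)$ and $\check B=\check S_n^{-1}\mathrm{diag}(B_k)$ are $(-1)$-shifted, since each is a product of a $(-1)$-shifted and a $0$-shifted matrix. Hence, for $N\ge1$, $\check H(N)=\check C\check A^{N-1}\check B$ is $(-(N-1)-1)=(-N)$-shifted, and $\check H(0)=\check D$ is $0$-shifted, which agrees with the same formula for $N=0$.

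For the first claim I take $N=i+j$. Then $\check S_l^i\check H(i+j)\check S_m^j$ has shift $i-(i+j)+j=0$ modulo $M$, so it is block diagonal with $l\times m$ blocks. This covers $i+j=0$ as well, where the matrix is $\check D$. For the second claim, with $j\ge1$, the product $\check S_l^i\check H(i+j)\check S_m^{j-1}$ has shift $i-(i+j)+(j-1)=-1$. A $(-1)$-shifted matrix equals $\check S_l^{-1}$ times a block diagonal matrix: premultiplying by $\check S_l$ adds $+1$ to the shift and gives a $0$-shifted matrix $E$, so the original matrix is $\check S_l^{-1}E$. That is exactly the cyclic structure of (\ref{pre0-checkABCD}), so the second claim follows.

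The only delicate point is keeping the direction conventions straight. I need to confirm from (\ref{sq}) that $\check S_q$ has its identity blocks at $(r,r+1)$, so that it is the $(+1)$ shift and not the $(-1)$ shift, and that "cyclic matrix" in the paper means $\check S^{-1}\cdot$diag rather than $\check S\cdot$diag. I will check this once against the explicit form of $\check S_q$ in (\ref{sq}). After that, everything reduces to the additive shift bookkeeping, with all indices taken modulo $M$ using $\check S_q^M=I$.
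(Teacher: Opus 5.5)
Your argument is correct and complete. The convention you flag comes out as you need. Equation (\ref{sq}) places $I_q$ in block $(r,r+1 \bmod M)$, so $\check S_q$ is $(+1)$-shifted. The paper describes $\check S_n^{-1}$ as moving the diagonal blocks of $\check A,\check B$ into subdiagonal positions, which confirms that ``cyclic'' means $(-1)$-shifted, i.e.\ $\check S^{-1}$ times a block-diagonal matrix. In fact the conclusion would survive reversing the convention, since only the consistency between $\check S_q$ in the lemma and $\check S_n^{-1}$ in $\check A,\check B$ matters.

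The paper gives no proof of this lemma; it quotes it from \cite{okajima1}, so there is no in-paper argument to compare against. The paper does record two ingredients: that $\check S_q^{-1}E\check S_q$ is block diagonal whenever $E$ is, and that $\check S_q^M=I$. These are the multiplicative form of your additive shift rule. A commutation-style proof built on them would show $\check H(N)=\check S_l^{-N}F_N$, with $F_N$ block diagonal, by pushing the $\check S^{-1}$ factors leftward past block-diagonal factors. Your grading argument reaches the same factorization in one step. It also handles the rectangular $l\times n$ and $n\times m$ blocks and the case $N=0$ uniformly, with no need for a rectangular version of the conjugation property.
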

The characteristics of cyclic reformulation shown in Lemma \ref{lemma3} serve as one of the innovative ideas for identifying periodically time-varying systems.

\subsection{Polytopic Uncertainty Model}\label{sec02}
We describe a system representation with polytopic uncertainty using $N$ parameter sets as vertices:
\begin{eqnarray} 
x(k+1)&=&A(\lambda) x(k)+B(\lambda) u(k)\label{siki1-jizen2}\\
y(k)&=&C(\lambda) x(k)+ D(\lambda) u(k)\label{siki2-jizen2}
\end{eqnarray}
where $A(\lambda), B(\lambda), C(\lambda), D(\lambda)$ are given in the following form:
\begin{eqnarray}
\label{eq:polytope_construction}
A(\lambda) &=& \sum_{i=0}^{N-1} \lambda_i A_{mi}, 
B(\lambda) = \sum_{i=0}^{N-1} \lambda_i B_{mi}, \\
C(\lambda) &=& \sum_{i=0}^{N-1} \lambda_i C_{mi}, 
D(\lambda) = \sum_{i=0}^{N-1} \lambda_i D_{mi} \label{eq:polytope_construction_cd}
\end{eqnarray}
Here, $\lambda \in \Lambda := \{\lambda \in \mathbb{R}^N : \lambda_i \geq 0, \sum_{i=0}^{N-1} \lambda_i = 1\}$. The polytopic uncertainty model is given as a convex polytope in the system parameter space, and any point inside the polytope is represented by the convex combinations in (\ref{eq:polytope_construction}), (\ref{eq:polytope_construction_cd}). The polytopic uncertainty model expressed by (\ref{siki1-jizen2}), (\ref{siki2-jizen2}) is widely used in robust control~\cite{robust1,robust3,robust4,kothare1996robust}. In particular, it has good compatibility with linear matrix inequalities (LMIs). Specifically, if LMI conditions are satisfied for each vertex of the polytope under the setting of a common Lyapunov matrix---provided that the LMI conditions are convex in the polytope parameter $\lambda$ and vertex-independent decision variables are used---the same conditions are guaranteed for all points inside due to convexity. Due to these characteristics, it can be applied to various robust control methods including $H_\infty$ control, robust pole placement, and robust model predictive control.


\section{Problem Formulation}\label{sec2}
This section describes the system representation of multirate systems. As a preliminary step before addressing multirate systems, we first consider the plant as a discrete-time linear time-invariant system. The linear time-invariant system is expressed as shown in (\ref{siki1}), (\ref{siki12}):
\begin{eqnarray} 
x(k+1)&=&Ax(k)+Bu(k)+d_u(k)\label{siki1}\\
y(k)&=&Cx(k)+ Du(k)+d_y(k)\label{siki12}
\end{eqnarray}
where $k$ is the discrete time, $x(k) \in \mathbb{R}^{n}$ is the state,
$u(k) \in \mathbb{R}^{m}$ is the input, $y(k) \in \mathbb{R}^{l}$ is the output,
and $d_u(k) \in \mathbb{R}^{n}$, $d_y(k) \in \mathbb{R}^{l}$ are process noise 
and observation noise, respectively.
The system matrices $A \in \mathbb{R}^{n\times n}$, $B \in \mathbb{R}^{n\times m}$, 
$C \in \mathbb{R}^{l\times n}$, and $D \in \mathbb{R}^{l\times m}$ are unknown 
and to be identified from input-output data.

The following assumptions are made throughout this paper:
the pair $(A,B)$ is controllable, $A$ has full rank $n$,
the system order $n$ is known,
the input $u(k)$ is persistently exciting of a sufficiently high
order as required by the subspace identification method,
and the noise signals $d_u(k)$ and $d_y(k)$
are zero-mean and mutually uncorrelated.
In practice, $n$ is either known from physical modeling or estimated
from the dominant singular values in the subspace identification step
and treated as known thereafter, which fixes the expanded order $Mn$
of the cycled realization.
An observability-type condition involving the multirate output
selection $V_j$ is stated separately as Assumption~\ref{ass2} below.
No specific distribution is assumed in the theoretical development,
while Gaussian distributions are used for numerical verification
in Section~\ref{sec5}.

Next, based on the linear time-invariant system (\ref{siki1}), (\ref{siki12}), we consider the representation of a multirate system where output observation periods differ from the control period. Using the periodically time-varying matrix $V_k$, we obtain the following state-space system (\ref{eq:P_multi}), (\ref{eq:P_multi2}) as a representation of the multirate system:
\begin{eqnarray}
x(k+1)&=&Ax(k)+Bu(k)+d_u(k)\label{eq:P_multi}\\
y(k)&=&V_kCx(k)+V_kDu(k) + V_k d_y(k) \label{eq:P_multi2}
\end{eqnarray}
The observation periods of the outputs $y_i(k)$, which are the components in $y(k)$, can be different. Each output observation period is set as a natural number multiple of the control period, and $M_1, \cdots, M_l$ are the observation periods of outputs $y_1(k), \cdots, y_l(k)$. Taking output $y_1(k)$ as an example, $y_1(k)$ is observed once every $M_1$ steps. At non-observation times, the measured signal used for identification is set to zero, i.e., the corresponding zero-padded measurement is $0$; the underlying physical output itself is not assumed to be zero. This zero-padded output sequence is then used for system identification.

The periodically time-varying matrix $V_k$ is determined as follows to characterize the above-mentioned observation period~\cite{multi1,multi2}. When $M$ is the least common multiple of the observation periods $M_1, \cdots, M_l$, $V_k$ is a matrix that characterizes the observation period of the output, and its period is given as $M$:
\begin{eqnarray}\label{eq:Sk}
V_{k}=\mathrm{diag}\left[\begin{array}{ccccc}
 v_{1}(k) & \cdots & v_{i}(k) & \cdots & v_{l}(k) 
\end{array}\right]
\end{eqnarray}
The element $v_i(k)$ in the matrix corresponds to $y_i(k)$ with the $i$-th observation period $M_{i}$. The component $v_i(k)$ is set to $1$ when the output signal $y_i$ is observed, since the observation period $M_i$ is a divisor of $M$ for any $i$. The timing at which the output signal is not observed is set to $v_i(k)=0$. Let $V_k$ be a matrix with period $M$, and the $i$-th element $v_i(k)$ equals $1$ exactly $M/M_i$ times in an $M$ period. From the above, the observation timing of all outputs can be represented within a framework with period $M$. Therefore, the following equation holds, and a multirate system can be represented by providing $\{V_0, \cdots, V_{M-1}\}$:
\begin{eqnarray}\label{eq:Sk2}
V_{k}=V_{k\,\bmod\,M}
\end{eqnarray}

Fig.~\ref{fig:multirate_sampling} illustrates the multirate sampling pattern for the case $M_1 = 2$, $M_2 = 3$ ($M = 6$). The input $u(k)$ is applied at every control step, while the two sensor outputs $y_1(k)$ and $y_2(k)$ are observed at their respective periods. At non-observation timings, the corresponding zero-padded measurement used for identification is set to zero. The bracket below the time axis indicates one full period $M = 6$, after which the observation pattern repeats.
\begin{figure}[!t]
\centering
\includegraphics[width=0.75\textwidth]{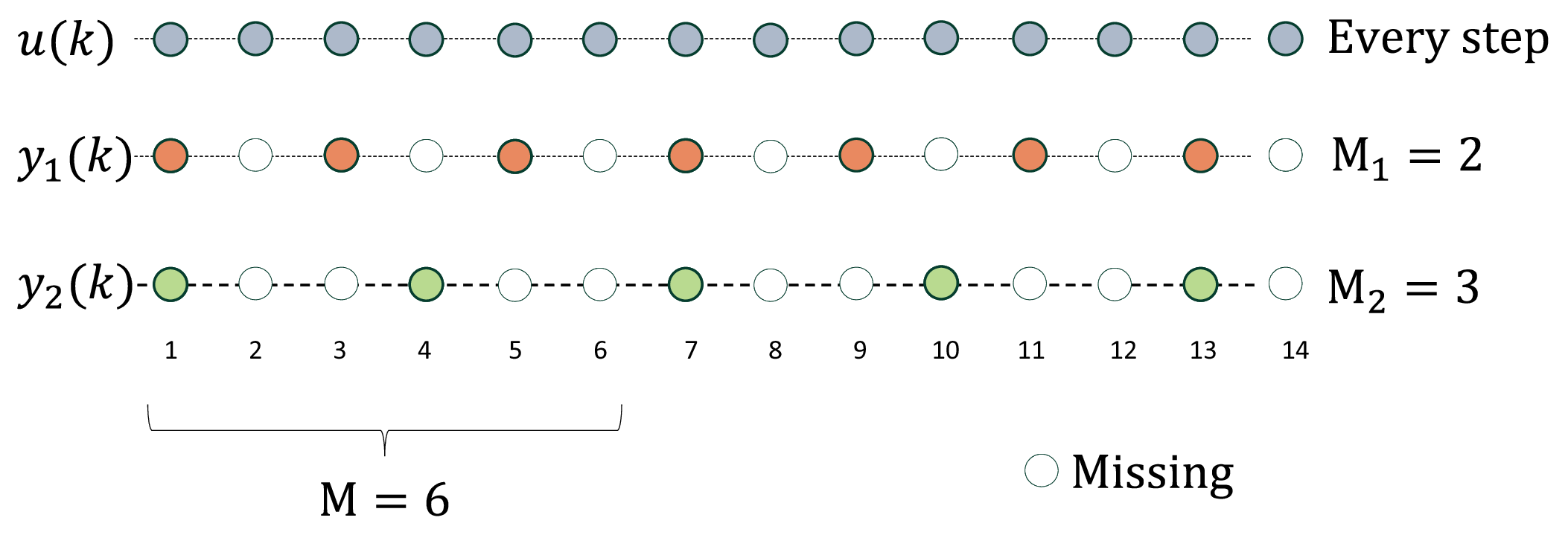}
\caption{Multirate sampling pattern for $M_1 = 2$, $M_2 = 3$ ($M = 6$). Filled circles indicate observed samples; open circles indicate missing outputs set to zero.}
\label{fig:multirate_sampling}
\end{figure}

Next, we provide the following assumption regarding the observability of the multirate system expressed as an $M$-periodic system:

\begin{ass}\label{ass2}
There exists at least one $j \in \{0, 1, \ldots, M-1\}$ such that the pair $(V_j C, A^M)$ is observable, i.e., the observability matrix for $(V_j C, A^M)$ has rank $n$.
\end{ass}

\begin{remark}\label{rem:ass2}
Assumption~\ref{ass2} is imposed directly on the pair $(V_j C, A^M)$ rather than on $(V_j C, A)$, because observability of $(V_j C, A)$ together with full-rank $A$ implies observability of $(V_j C, A^M)$ only under an additional spectral non-aliasing condition, namely that distinct eigenvalues of $A$ remain distinct after the $M$-th power~\cite{cyc2,chen1995sampled}. In typical multirate sensor fusion applications involving complementary sensors (e.g., vision and IMU, or position and velocity measurements), at least one observation timing provides sufficient sensor outputs and the spectrum of $A$ avoids exact aliasing, so that Assumption~\ref{ass2} is naturally satisfied.
\end{remark}

From the above, the multirate system (\ref{eq:P_multi}), (\ref{eq:P_multi2}) is represented as a periodically time-varying system with period $M$. By adopting this representation, the characteristics of the multirate system are incorporated into $V_k$, and the parameters to be identified in mathematical modeling are reduced to four: $A, B, C, D$.

\begin{prob}\label{prob1}
Suppose that input data $\{u(k)\}$ is applied to the plant (\ref{eq:P_multi}), (\ref{eq:P_multi2}) under a multirate sensing environment, yielding output data $\{y(k)\}$, and let $N_{\text{data}}$ be the data length. From the input-output data $\{u(k), y(k)\}_{k=0}^{N_{\text{data}}-1}$, estimate the system matrices $A$, $B$, $C$, and $D$.
\hfill $\Box$
\end{prob}

We denote the system matrices $A_m$, $B_m$, $C_m$, and $D_m$ as the solutions of Problem \ref{prob1}. Among these, $A_m$, $B_m$, and $C_m$ are identifiable only up to a common state coordinate transformation, whereas $D_m$ is invariant under such a transformation and is therefore recovered exactly. In Section~\ref{sec3}, the multirate system (\ref{eq:P_multi}), (\ref{eq:P_multi2}) is cast into a cyclic reformulated time-invariant system of expanded dimension $Mn$, and in Section~\ref{sec4}, subspace identification is applied to the resulting cycled signals followed by a coordinate transformation that recovers the cyclic block structure. This procedure yields $M$ parameter sets $(A_{mi}, B_{mi}, C_{mi}, D_{mi})$, $i=0,\ldots,M-1$, each of which estimates the original matrices $(A, B, C, D)$ in the sense described above.

\section{Cyclic Reformulation of Multirate Systems}\label{sec3}

\subsection{Time-Invariant Representation via Cyclic Reformulation} \label{sec22}

This section applies the cyclic reformulation of Section~\ref{sec01} to the multirate system (\ref{eq:P_multi}), (\ref{eq:P_multi2}). Substituting the multirate system matrices $A_k = A$, $B_k = B$, $C_k = V_k C$, and $D_k = V_k D$ into (\ref{pre0-eqcyclic}), (\ref{pre0-checkABCD}) yields the following time-invariant representation:
\begin{equation}
 \label{eqcyclic}
     \begin{array}{rcl}
          \check{x}(k+1) & = & \check{A}\check{x}(k) + \check{B}\check{u}(k)+\check{d}_u(k)\\
          \check{y}(k) & = & \check{C}\check{x}(k) + \check{D}\check{u}(k)+\check{D}_V\check{d}_y(k), 
     \end{array}
\end{equation}
where $\check{x}(k)\in \mathbb{R}^{Mn}$, $\check{y}(k)\in \mathbb{R}^{Ml}$, and
\begin{eqnarray}
& \check{A} = \check{S}_n^{-1}(I_M \otimes A), \quad \check{B} = \check{S}_n^{-1}(I_M \otimes B), & \nonumber \\
& \check{C} = \mathrm{diag}(V_0 C, \ldots, V_{M-1} C), \quad \check{D} = \mathrm{diag}(V_0 D, \ldots, V_{M-1} D). & \label{eqcyclic-mat}
\end{eqnarray}
The cycled noise signals $\check{d}_u(k) \in \mathbb{R}^{Mn}$ and $\check{d}_y(k) \in \mathbb{R}^{Ml}$ are defined consistently with the cycled input and output as $\check{d}_u(k) = \check{S}_n^{-1}(e_{k \bmod M} \otimes d_u(k))$ and $\check{d}_y(k) = e_{k \bmod M} \otimes d_y(k)$. The matrix $\check{D}_V = \mathrm{diag}(V_0, V_1, \ldots, V_{M-1}) \in \mathbb{R}^{Ml \times Ml}$ is the observation-noise feedthrough matrix, where the subscript $V$ indicates that it is built from the selection matrices $V_k$.

For system identification, the cycled output signal $\check{y}(k)$ is constructed from the noise-affected measurements $y(k)$ by applying the same signal cycling operation as in (\ref{checku}). Hence, the effects of both process noise $d_u(k)$ and observation noise $d_y(k)$ are implicitly contained in the cycled output data used for identification.

\subsection{Controllability and Observability}
The cycled system (\ref{eqcyclic}) inherits controllability and observability from the original system under mild conditions~\cite{okajima2}.

The controllability matrix for system (\ref{eqcyclic}) is:
\begin{eqnarray}
{\Psi}_c = \left[\check B, \check A\check B, \cdots, \check A^{Mn-1}\check B\right]
\end{eqnarray}
The controllability of system (\ref{eqcyclic}) is automatically satisfied when the pair $(A,B)$ is controllable. Then, the following condition is satisfied for the cycled system:
\begin{eqnarray}
{\mathbf{rank}} {\Psi}_c = Mn
\end{eqnarray}

Next, an observability matrix for the $M$-periodic multirate system can be written as follows:
\begin{eqnarray}
{\Psi}_o= \left[\begin{array}{c}\check C\\ \check C\check A\\ \vdots\\ \check C\check A^{Mn-1}\end{array}\right]
\end{eqnarray}
The matrix size of ${\Psi}_o$ is $M^2 ln\times Mn$, since $\check C \in \mathbb{R}^{Ml \times Mn}$ and $\Psi_o$ stacks $Mn$ block rows. In multirate systems, $\check C$ contains zero blocks at non-observation timings due to $V_k$. Since $A$ has full rank $n$ by assumption, observability can be maintained despite these missing observations. Under Assumption \ref{ass2} and this rank condition, the observability of the pair $(\check C,\check A)$ is established~\cite{okajima2}:
\begin{eqnarray}
{\mathbf{rank}} {\Psi}_o = Mn
\end{eqnarray}

\subsection{Auxiliary Matrices for Coordinate Transformation}\label{sec32}

This section introduces auxiliary matrices required for the coordinate transformation that converts the identified dense matrices into cyclic reformulation structure. The coordinate transformation matrix $T$, constructed using these auxiliary matrices, must be full-rank to ensure the transformation is invertible.

Define the block diagonal matrix $\check G_\nu \in \mathbb{R}^{Mm \times Mn}$ for $\nu = 0, \ldots, n-1$ as:
\begin{equation}
     \check{G}_\nu = \mathrm{diag}(\underbrace{G_\nu, \ldots, G_\nu}_{M}) \in \mathbb{R}^{Mm \times Mn}, \label{gj}
\end{equation}
where $G_\nu$ are $m \times n$ matrices. Using $\check G_\nu$, we define the candidate transformation matrix $\check Y \in \mathbb{R}^{Mn \times Mn}$ as:
\begin{eqnarray}
\check Y = \sum_{i = 0}^{n-1}\sum_{j = 0}^{M-1}  \check A^{Mi+j}\check B \check S_m^{j+1} \check G_{i}.  \label{checky2b}
\end{eqnarray}

\begin{lemma}\label{lemma4}
For any block diagonal matrices $\check G_\nu$ of the form (\ref{gj}), the product $\check C\check Y$ is a block diagonal matrix with $l \times n$ block elements.
\end{lemma}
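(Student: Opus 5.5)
The plan is to expand $\check C\check Y$ term by term and show that each summand is block diagonal; sums of block diagonal matrices are block diagonal. By linearity,
\begin{equation*}
\check C\check Y = \sum_{i=0}^{n-1}\sum_{j=0}^{M-1} \check C\check A^{Mi+j}\check B\,\check S_m^{j+1}\check G_i .
\end{equation*}
Each summand factors as $\bigl(\check C\check A^{Mi+j}\check B\bigr)\check S_m^{j+1}\check G_i = \check H(Mi+j+1)\,\check S_m^{j+1}\check G_i$, using the Markov parameter definition (\ref{markov}).

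The key step is to invoke Lemma~\ref{lemma3} with the shift on the left taken to be zero. Applying the first statement of Lemma~\ref{lemma3} with its parameters set to $i\to 0$ and $j\to Mi+j+1$ shows that $\check H(Mi+j+1)\check S_m^{Mi+j+1}$ is block diagonal with $l\times m$ blocks. We then need to replace $\check S_m^{Mi+j+1}$ by $\check S_m^{j+1}$. This uses the stated property $\check S_m^M=I_{Mm}$, which gives $\check S_m^{Mi+j+1}=\check S_m^{Mi}\check S_m^{j+1}=\check S_m^{j+1}$. Hence $\check H(Mi+j+1)\check S_m^{j+1}$ is block diagonal with $l\times m$ blocks. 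Note that the block-row dimension of $\check H$ is $l$, since $\check C=\mathrm{diag}(V_0C,\ldots,V_{M-1}C)$ has $l\times n$ blocks, so the lemma applies verbatim with $C_k=V_kC$.

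It remains to multiply on the right by $\check G_i=\mathrm{diag}(G_i,\ldots,G_i)$. This matrix is block diagonal with $m\times n$ blocks. The product of a block diagonal matrix with $l\times m$ blocks and one with $m\times n$ blocks, with conformal partitions ($M$ blocks each), is block diagonal with $l\times n$ blocks. Summing over $i$ and $j$ gives the claim. This holds for arbitrary $G_\nu$, because the argument never uses their values.

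The only point requiring care is the index bookkeeping in applying Lemma~\ref{lemma3}: the exponent of $\check S_m$ must match the Markov index after reduction modulo $M$. If one prefers to avoid relying on the exact form of Lemma~\ref{lemma3}, I would verify the block structure directly. By (\ref{eqcyclic-mat}), $\check A^{p}\check B=\check S_n^{-p-1}(I_M\otimes A^pB)$, using that $\check S_n^{-1}$ commutes with $I_M\otimes A$ in the sense $(I_M\otimes A)\check S_n^{-1}=\check S_n^{-1}(I_M\otimes A)$. Then $\check S_n^{-(p+1)}(I_M\otimes X)=(I_M\otimes X)\check S_m^{-(p+1)}$ for $X\in\mathbb{R}^{n\times m}$. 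Thus
\begin{equation*}
\check C\check A^{Mi+j}\check B\,\check S_m^{j+1} = \check C\,(I_M\otimes A^{Mi+j}B)\,\check S_m^{-(Mi+j+1)}\check S_m^{j+1} = \check C\,(I_M\otimes A^{Mi+j}B),
\end{equation*}
which is manifestly block diagonal. This gives a self-contained confirmation of the key step.
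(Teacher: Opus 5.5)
Your proof is correct and follows essentially the same route as the paper. You expand $\check C\check Y$ termwise, use Lemma~\ref{lemma3} together with $\check S^M=I$ to show that each $\check C\check A^{Mi+j}\check B\,\check S_m^{j+1}$ is block diagonal, multiply by the block-diagonal $\check G_i$, and sum. The one difference is bookkeeping: the paper applies Lemma~\ref{lemma3} with left shift $\check S_l^{Mi}$ and cancels it via $\check S_l^{M}=I_{Ml}$, while you take the left shift to be zero and cancel $\check S_m^{Mi}$ on the right. Your Kronecker-product identity $\check C\check A^{p}\check B\,\check S_m^{p+1}=\check C\,(I_M\otimes A^{p}B)$ is also a valid self-contained check of the key step.
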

\begin{proof}
Note that $\check H(Mi+j+1) = \check C \check A^{Mi+j}\check B$ from the definition (\ref{markov}). From Lemma \ref{lemma3}, the matrix $\check S_l^{Mi} \check H(Mi+j+1)\check S_m^{j+1}$ is block diagonal; since $\check S_l^{M} = I_{Ml}$ gives $\check S_l^{Mi} = I_{Ml}$, this implies that $\check C \check A^{Mi+j}\check B \check S_m^{j+1}$ itself has block diagonal structure. Since $\check G_{i}$ is also block diagonal, the product $\check C \check A^{Mi+j}\check B \check S_m^{j+1} \check G_{i}$ inherits the block diagonal structure. The summation over $i$ and $j$ preserves this structure.
\end{proof}

To avoid an a priori rank deficiency in the construction (\ref{checky2b}), we impose on $G_\nu$ the rank condition
\begin{eqnarray}\label{rankg}
{\mathbf{rank}} \left[\begin{array}{c}
G_0\\ \vdots \\ G_{n-1}
\end{array}\right]  = n.
\end{eqnarray}
This condition is easily met; for example, the stacked matrix $[G_0^T, \ldots, G_{n-1}^T]^T \in \mathbb{R}^{nm \times n}$ can be chosen to have rank $n$, which for single-input systems ($m = 1$) reduces to $G_\nu = e_{\nu+1}^T$. Whether the resulting $\check Y$ is nonsingular still depends on the specific choice of $G_\nu$ and on the controllability properties of $(\check A, \check B)$, and is checked numerically once $G_\nu$ has been fixed.

\section{Robust Model Construction via Cyclic Reformulation}\label{sec4}
\subsection{System Identification Using Cycled Signals}\label{sec41} 
This section describes how subspace identification methods~\cite{id1,id2} are applied to cycled signals for multirate system identification. Subspace methods are well-suited for this purpose because they handle multiple-input multiple-output (MIMO) systems naturally and employ numerically stable algorithms (SVD, QR decomposition).

The identification procedure is as follows:
\begin{enumerate}
\item Apply input $u(k)$ to the multirate system (\ref{eq:P_multi}), (\ref{eq:P_multi2}) and obtain the noise-affected output $y(k)$.
\item Construct cycled signals $\check u(k) \in \mathbb{R}^{Mm}$ and $\check y(k) \in \mathbb{R}^{Ml}$ from the input-output data $(u,y)$ using the cycling operation (\ref{checku}).
\item Apply subspace identification to the cycled signals $(\check u, \check y)$.
\end{enumerate}
A key observation is that applying subspace identification to the cycled signals yields a state-space model of expanded dimension $Mn$, not the original dimension $n$. Specifically, the identified parameters $(\A_*,\B_*,\C_*,\D_*)$ have dimensions $\A_* \in \mathbb{R}^{Mn\times Mn}$, $\B_* \in \mathbb{R}^{Mn\times Mm}$, $\C_* \in \mathbb{R}^{Ml\times Mn}$, and $\D_* \in \mathbb{R}^{Ml\times Mm}$. Although the cycled signals $\check u(k)$ and $\check y(k)$ constructed in Step 2 retain their block-sparse support even under noise, these expanded-dimension matrices are generally dense and do not exhibit the cyclic reformulation structure. The coordinate transformation in Section \ref{sec42} will convert these dense matrices into the structured cyclic form.

To characterize the identified parameters $(\A_*,\B_*,\C_*,\D_*)$, we examine their Markov parameters:
\begin{eqnarray}
\check \Hm^* (i) = \begin{cases}\D_*,&i = 0 \\ \C_*  \A_*^{i-1} \B_*, & i=1,2,\cdots \end{cases} 
\end{eqnarray}
In the same manner as Lemma \ref{lemma3}, we consider the following $Ml\times Mm$ matrix:
\begin{eqnarray}
\check S_l^{i} \check \Hm^* (i+j) \check S_m^j. \label{mar12}
\end{eqnarray}

Based on this characterization, we introduce the following assumption for the identified parameters $(\A_*,\B_*,\C_*,\D_*)$:

\begin{ass}\label{ass1}
(Structural preservation property) Let $\check{\Hm}^*(i)$ denote the Markov parameters of the realization $(\A_*, \B_*, \C_*, \D_*)$ obtained by subspace identification from the cycled signals $\check{u}(k), \check{y}(k)$. The structural relations established in Lemma~\ref{lemma3} for $\check{H}(i)$ are inherited by $\check{\Hm}^*(i)$: for any non-negative integers $i, j$, the matrix $\check{S}_l^i \check{\Hm}^*(i+j) \check{S}_m^j$ is block diagonal with $l \times m$ block elements, and $\check{S}_l^i \check{\Hm}^*(i+j) \check{S}_m^{j-1}$ ($j \geq 1$) is a cyclic matrix with $l \times m$ block elements. $\hfill \Box$
\end{ass}

Assumption~\ref{ass1} holds exactly in the noise-free case under the standard rank and persistence-of-excitation conditions required by the subspace identification method. If the identified realization is minimal and input-output equivalent to the true cycled system $(\check{A}, \check{B}, \check{C}, \check{D})$, then $\check{\Hm}^*(i) = \check{H}(i)$ for all $i$, so that the block-diagonal and cyclic structural relations of Lemma~\ref{lemma3} carry over verbatim. Under finite noisy data, this exact equality no longer holds and the admissible Markov blocks are perturbed by process and observation noise. However, the cycling operation imposes a phase-dependent support structure on both the input and output signals: at each time instant, only the block corresponding to $k \bmod M$ is active, and the zero-padded missing outputs remain zero in the constructed cycled signals. Hence, finite-noise perturbations do not necessarily populate the structurally forbidden off-block entries. Appendix~\ref{app:verification} provides numerical evidence for Assumption~\ref{ass1}, where verifying the block-diagonal property at $i = 0$ is shown to suffice for all index pairs with $i+j \leq Mn$, and the normalized off-block residual remains below $2 \times 10^{-15}$ even at the highest tested noise level ($\sigma = 0.1$).

\subsection{Derivation of Cyclic Reformulation by Coordinate Transformation}\label{sec42}
The matrix parameters are obtained as $\A_*, \B_*, \C_*, \D_*$ by using the subspace identification method with the cycled signals $\check u(k)$ and $\check y(k)$. The identified matrices $\A_*, \B_*, \C_*, \D_*$ are generally dense and do not necessarily exhibit the cyclic reformulation structure. In this section, we introduce a state coordinate transformation for the identified realization $(\A_*, \B_*, \C_*, \D_*)$ so as to recover the cyclic reformulation structure.

For a nonsingular matrix $T \in \mathbb{R}^{Mn\times Mn}$, define the transformed realization by
\begin{eqnarray}
\check A_m&=T^{-1}\A_* T&, \check B_m=T^{-1}\B_*\label{eq:T1}\\
\check C_m&=\C_*T&, \check D_m=\D_*\label{eq:T2}
\end{eqnarray}
Then the transformed state is given by $\check x_{tf}=T^{-1}\check x_*$, and the corresponding state-space model is
\begin{equation}
     \label{cycsolution}
     \begin{array}{rcl}
          \check{x}_{tf}(k+1) & = & \check A_m\check{x}_{tf}(k) + \check B_m\check{u}(k)\\
          \check{y}(k) & = & \check C_m \check{x}_{tf}(k) + \check{D}_m\check{u}(k).
     \end{array}
\end{equation}

To convert the dense identified parameters $(\A_*, \B_*, \C_*, \D_*)$ into cyclic reformulation structure, we define the coordinate transformation matrix $T$ based on Lemma~\ref{lemma4} and Assumption~\ref{ass1} as follows:
\begin{eqnarray}
T = \sum_{i = 0}^{n-1}\sum_{j = 0}^{M-1}  \A_*^{Mi+j}\B_* \check S_m^{j+1} \check G_{i}.  \label{checky3}
\end{eqnarray}
Here, $\check G_\nu$ is given in (\ref{gj}), where $G_\nu$ satisfies the rank condition (\ref{rankg}). The expression (\ref{checky3}) is obtained from (\ref{checky2b}) by replacing $(\check A, \check B)$ with $(\A_*, \B_*)$. In the proposed algorithm, the nonsingularity of $T$ is verified numerically after selecting $G_\nu$. If $T$ is rank-deficient or ill-conditioned, $G_\nu$ is reselected.

The following theorem is a restatement, in the notation of the present paper, of the cyclic-structure recovery result proved in~\cite{okajima2}; we recall it only to make clear how the $M$ parameter sets are extracted from the identified realization. The key mechanism is Lemma~\ref{lemma4}: the product $\check C\check Y$ is block diagonal for any admissible $\check G_\nu$, so that, under Assumption~\ref{ass1}, the identified counterpart $\check C_m=\C_*T$ inherits this block-diagonal structure, while the accompanying cyclic-matrix relations of Assumption~\ref{ass1} render $\check A_m$ and $\check B_m$ cyclic. Since the statement thus coincides with the corresponding result of~\cite{okajima2} up to the present notation, it is presented without a separate proof, and the reader is referred to~\cite{okajima2} for the full derivation.

\begin{theorem}\label{theo1}
Let $\A_*, \B_*, \C_*, \D_*$ be parameters obtained from subspace identification of the cycled signals, and assume that Assumption~\ref{ass1} holds for these parameters. Assume further that the realization $(\A_*,\B_*,\C_*,\D_*)$ is minimal, i.e., the pairs $(\A_*, \B_*)$ and $(\C_*, \A_*)$ are controllable and observable, respectively. Suppose that $G_\nu$ satisfies the rank condition (\ref{rankg}) and that the resulting matrix $T$ in (\ref{checky3}) is nonsingular. Then the system $\check A_m, \check B_m, \check C_m, \check D_m$, obtained by the state coordinate transformation (\ref{eq:T1}), (\ref{eq:T2}) of $\A_*, \B_*, \C_*, \D_*$ using the transformation matrix $T$ of (\ref{checky3}), has the exact cyclic reformulation structure. Specifically, $\check A_m$ and $\check B_m$ are cyclic matrices, and $\check C_m$ and $\check D_m$ are block diagonal matrices.
\end{theorem}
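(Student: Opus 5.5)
The plan is to verify the four structural claims one at a time. $\check D_m$ and $\check C_m$ follow directly from Assumption~\ref{ass1} and Lemma~\ref{lemma4}. The structure of $\check B_m$ and $\check A_m$ is then transferred through the extended observability matrix of the transformed realization, which has full column rank because $(\C_*,\A_*)$ is observable and $T$ is nonsingular.

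\emph{Step 1 ($\check D_m$ and $\check C_m$).} $\check D_m=\D_*=\check\Hm^*(0)$, which is block diagonal by Assumption~\ref{ass1} with $i=j=0$. For $\check C_m=\C_*T$ we expand (\ref{checky3}):
\[
\check C_m=\sum_{i=0}^{n-1}\sum_{j=0}^{M-1}\check\Hm^*(Mi+j+1)\,\check S_m^{j+1}\check G_i .
\]
Apply Assumption~\ref{ass1} with the pair $(Mi,\,j+1)$ and use $\check S_l^{Mi}=I$. Each summand is then block diagonal, exactly as in the proof of Lemma~\ref{lemma4}, and hence so is $\check C_m$.

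\emph{Step 2 (shifted structure of $\check C_m\check A_m^k$).} For every $k\ge 0$,
\[
\check C_m\check A_m^k=\C_*\A_*^kT=\sum_{i,j}\check\Hm^*(k+Mi+j+1)\check S_m^{j+1}\check G_i .
\]
Applying Assumption~\ref{ass1} with the pair $(k+Mi,\,j+1)$ shows that $\check S_l^{k}\,\check C_m\check A_m^k$ is block diagonal. Equivalently, $\check C_m\check A_m^k=\check S_l^{-k}E_k$ with $E_k$ block diagonal. Likewise, the cyclic-matrix part of Assumption~\ref{ass1} gives
\[
\check C_m\check A_m^k\check B_m=\check\Hm^*(k+1)=\check S_l^{-k}F_k,
\]
where $F_k=\check S_l^{k}\check\Hm^*(k+1)$ is a cyclic matrix (take the pair $(k,1)$).

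\emph{Step 3 (transfer to $\check B_m$ and $\check A_m$).} Stack these relations for $k=0,\dots,Mn-1$ into
\[
\mathcal O_m\check B_m=[\check\Hm^*(1);\dots;\check\Hm^*(Mn)], \qquad \mathcal O_m\check A_m=[\check C_m\check A_m;\dots;\check C_m\check A_m^{Mn}],
\]
where $\mathcal O_m=\mathcal O_*T$ has full column rank. By Step~2, the block of $\mathcal O_m$ acting on state block $p$ has nonzero entries only in output block $p-k \pmod M$ of the $k$-th row block. So after a row permutation grouping output blocks, $\mathcal O_m$ becomes block diagonal in the state partition, with diagonal pieces $\mathcal O^{(p)}$. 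We then use the following uniqueness argument.
\begin{itemize}
\item Each $\mathcal O^{(p)}$ must itself have full column rank $n$, since $\mathcal O_m$ does.
\item The right-hand side for $\check B_m$ has, in the same permuted coordinates, nonzero entries only in the block pattern of a matrix mapping input block $q$ to state block $q+1$. This follows from the cyclic structure of $F_k$ and the shifts $\check S_l^{-k}$.
\item The solution of $\mathcal O_m X=R$ is unique. Solving the structured system blockwise produces a cyclic solution, so $\check B_m$ is cyclic.
\item The same argument applied to $\mathcal O_m\check A_m$, whose right-hand side is $\mathcal O_m$ with its row pattern shifted by one step, gives $\check A_m$ cyclic (state block $p$ mapped to block $p+1$).
\end{itemize}

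As a sanity check we would also carry out the noise-free case explicitly. There $\A_*=S\check AS^{-1}$, $\B_*=S\check B$ for some nonsingular $S$, so $T=S\check Y$. Since $\check A^k=\check S_n^{-k}(I_M\otimes A^k)$ and $\check S_n^{-(j+1)}(I_M\otimes X)\check S_m^{j+1}=I_M\otimes X$, one gets $\check Y=I_M\otimes Y$ with $Y=\sum_{i,j}A^{Mi+j}BG_i$. Consequently $\check A_m=\check S_n^{-1}(I_M\otimes Y^{-1}AY)$, which is visibly cyclic.

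The main obstacle is the bookkeeping in Step~3. One must check carefully that the shift $\check S_l^{-k}$ in row block $k$ aligns consistently across all $k$. This alignment is what makes the permuted observability matrix genuinely block diagonal, and it is what makes the right-hand sides carry precisely the cyclic (one-step-shifted) sparsity pattern. We would handle it with explicit block-index arithmetic modulo $M$, checked first on $M=2$.
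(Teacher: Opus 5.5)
Your proof is correct. The paper does not actually prove Theorem~\ref{theo1}. It defers to~\cite{okajima2} and sketches only two points:
\begin{itemize}
\item Lemma~\ref{lemma4}, carried over to $(\A_*,\B_*,\C_*)$ by Assumption~\ref{ass1}, makes $\check C_m=\C_*T$ block diagonal. This is exactly your Step~1.
\item The cyclic-matrix relations of Assumption~\ref{ass1} ``render'' $\check A_m$ and $\check B_m$ cyclic. No mechanism is given for this.
\end{itemize}

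Your Steps~2--3 supply that missing mechanism. Assumption~\ref{ass1} constrains only input--output quantities, and you transfer it to the internal matrices through the left-invertible operator $\mathcal O_m=\mathcal O_*T$. Uniqueness of the solution of $\mathcal O_mX=R$ then forces the cyclic pattern.

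Spelling this out also shows which hypotheses are really used:
\begin{itemize}
\item Observability of $(\C_*,\A_*)$ and nonsingularity of $T$. Controllability is never used, and it is implied anyway, since the columns of $T$ lie in the range of the controllability matrix of $(\A_*,\B_*)$.
\item Assumption~\ref{ass1} for Markov indices beyond $Mn$, up to $2Mn$ with your $Mn$ row blocks. This is because row block $k$ of $\mathcal O_m\check A_m$ equals $\sum_{i,j}\check\Hm^*(k+Mi+j+2)\check S_m^{j+1}\check G_i$.
\end{itemize}
The second point is worth noting. The paper's appendix checks the assumption only up to index $Mn$ and calls this the whole range the algorithm needs. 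That range suffices for $\check C_m$, but along your route the cyclicity of $\check A_m$ and $\check B_m$ also uses larger indices. Since the theorem assumes Assumption~\ref{ass1} in full, this is not a gap in your argument.

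One bookkeeping correction. $\check S_l^{-1}$ has its identity blocks at positions $(c+1,c)$. Hence $\check C_m\check A_m^k=\check S_l^{-k}E_k$ maps state block $p$ to output block $p+k$, not $p-k$. The same factor $\check S_l^{-k}$ multiplies the right-hand sides, so the sign cancels and your conclusion stands.

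You can avoid the index arithmetic altogether by premultiplying row block $k$ of both equations by the permutation $\check S_l^{k}$. Then:
\begin{itemize}
\item the coefficient matrix becomes the stack of block-diagonal $E_k$, still of full column rank;
\item the right-hand sides become stacks of cyclic matrices, namely $F_k$ for $\check B_m$ and $\check S_l^{-1}E_{k+1}$ for $\check A_m$.
\end{itemize}
Now decompose the unknown as $X=\sum_s X^{(s)}$, with $X^{(s)}$ supported on block positions $(c+s,c)$. Each equation splits by $s$, and full column rank gives $X^{(s)}=0$ for $s\neq 1$.
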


For the multirate system considered in this paper, the underlying plant matrices $A, B, C, D$ are time-invariant, and the periodicity is introduced only through the output selection matrices $V_k$. By Theorem~\ref{theo1}, the coordinate transformation recovers the cyclic structure and yields the $M$ sub-block parameter sets $(A_{mi}, B_{mi}, C_{mi}, D_{mi})$; in the exact noise-free case these sets coincide and represent the same underlying LTI plant up to a state coordinate transformation. In the present paper, this known noise-free property is used only as a reference point; the main contribution lies in the finite-noise case, where the $M$ parameter sets exhibit different perturbations and are exploited for the centroid and polytopic models constructed below.

\begin{remark}\label{rem:finiteness}
Theorem~\ref{theo1} is an exact statement based on Assumption~\ref{ass1}; it does not claim that finite noisy data yield an exactly cyclic realization, nor does it provide a guaranteed bound on the distance between the true plant and the polytope. In practice, however, the coordinate transformation (\ref{checky3}) can still be applied to noisy identified parameters $(\A_*,\B_*,\C_*,\D_*)$, provided that the structural relations in Assumption~\ref{ass1} hold to sufficient numerical accuracy. If this is the case, the transformed realization is approximately cyclic, and the $M$ parameter sets, which coincide in the ideal noise-free LTI case discussed above, exhibit noise-induced dispersion that can be exploited for the centroid and polytopic models in Sections~\ref{sec43a} and~\ref{sec43b}. Numerical evidence supporting Assumption~\ref{ass1} under finite noise is provided in Appendix~\ref{app:verification} for the example in Section~\ref{sec52}, and the resulting models are further assessed through the output-level validation in Section~\ref{sec5}. This distinction separates the exact cyclic-recovery theorem from the practical finite-noise modeling step.
\end{remark}

That is, the transformed matrices take the following forms:
\begin{eqnarray}
& \check{A}_m = \check{S}_n^{-1} \mathrm{diag}(A_{m0}, A_{m1}, \ldots, A_{m(M-1)}), & \label{Amtilde} \\
& \check{B}_m = \check{S}_n^{-1} \mathrm{diag}(B_{m0}, B_{m1}, \ldots, B_{m(M-1)}), & \label{Bmtilde}
\end{eqnarray}

\begin{eqnarray}
\label{Cmtilde}
\check C_{m}=\mathrm{diag}\left[\begin{array}{c}
C_{m0}, C_{m1}, \cdots, C_{m(M-1)}
\end{array}\right]
\end{eqnarray}
\begin{eqnarray}
\label{Dmtilde}
\check D_{m}=\mathrm{diag}\left[\begin{array}{c}
D_{m0}, D_{m1}, \cdots, D_{m(M-1)}
\end{array}\right]
\end{eqnarray}
The $M$ sub-block parameter sets $(A_{mi}, B_{mi}, C_{mi}, D_{mi})$, $i = 0, \ldots, M-1$, extracted from (\ref{Amtilde})--(\ref{Dmtilde}), where $A_{mi} \in \mathbb{R}^{n\times n}$, $B_{mi} \in \mathbb{R}^{n\times m}$, $C_{mi} \in \mathbb{R}^{l\times n}$, and $D_{mi} \in \mathbb{R}^{l\times m}$ match the dimensions of the original matrices in (\ref{siki1}), (\ref{siki12}), constitute the identified models that address Problem~\ref{prob1}.

Note that the $M$ parameter sets $(A_{mi}, B_{mi}, C_{mi}, D_{mi})$ are determined up to a common state coordinate transformation, consistent with the formulation in Problem~\ref{prob1}. This coordinate ambiguity is inherent to subspace identification and does not affect the centroid model or the polytopic uncertainty model constructed in the following sections.

\subsection{Centroid Model Construction}\label{sec43a}

The $M$ parameter sets obtained in Section~\ref{sec42} each provide a solution to Problem~\ref{prob1} up to a state coordinate transformation. In the noise-free case they coincide and recover the same underlying LTI model, as established in~\cite{okajima2}, whereas under noisy conditions they are perturbed differently and disperse. To obtain a single noise-reduced nominal model from these dispersed sets, the present subsection constructs the \emph{centroid model} by averaging the $M$ sets.

The centroid model is defined by averaging the $M$ parameter sets extracted from $\check A_m, \check B_m, \check C_m, \check D_m$ in (\ref{Amtilde})--(\ref{Dmtilde}). Since $A_{mi}$ and $B_{mi}$ are identified at every phase $i = 0, \ldots, M-1$, they are averaged uniformly:
\begin{eqnarray}
A_c = \frac{1}{M}\sum_{i=0}^{M-1} A_{mi}, \quad  B_c = \frac{1}{M}\sum_{i=0}^{M-1} B_{mi}. \label{eq:centroid_AB}
\end{eqnarray}
For $C_{mi}$ and $D_{mi}$, the $r$-th row is identifiable only at the $M/M_r$ phases satisfying $v_r(i) = 1$, where $M_r$ is the observation period of output $y_r$ defined in Section~\ref{sec2}. The centroid is therefore defined row by row using only these observable-timing estimates:
\begin{eqnarray}
[C_c]_{r,:} = \frac{M_r}{M}\sum_{v_r(i)=1} [C_{mi}]_{r,:}, \quad [D_c]_{r,:} = \frac{M_r}{M}\sum_{v_r(i)=1} [D_{mi}]_{r,:}, \label{eq:centroid_CD}
\end{eqnarray}
where $[\,\cdot\,]_{r,:}$ denotes the $r$-th row and the sum runs over the $M/M_r$ phases at which output $r$ is observed. This row-wise convention is justified because $C$ and $D$ are time-invariant in the original system (\ref{siki1}), (\ref{siki12}), so every observable-timing estimate of row $r$ targets the same row of $C$; averaging only the informative entries avoids double-counting that would arise if non-observation phases were filled and then re-averaged uniformly.

To retain the $M$-vertex structure required for the polytopic model in Section~\ref{sec43b}, the vertex matrices $C_{mi}, D_{mi}$ at non-observation phases ($v_r(i) = 0$) are completed row by row by substituting the row centroids $[C_c]_{r,:}, [D_c]_{r,:}$ in (\ref{eq:centroid_CD}). This imputation affects only the output matrices at non-observation timings and does not influence the identified $A$ and $B$ matrices.

Writing each parameter set as $(A_{mi}, B_{mi}, C_{mi}, D_{mi}) \approx (A, B, C, D) + \epsilon_i$, with $\epsilon_i$ the noise-induced variation, the averaging operation can reduce noise effects if the variations $\epsilon_i$ are approximately centered around the true parameters. In the idealized case where the perturbations $\epsilon_i$ are mutually uncorrelated and have a common covariance, the variance of the centroid estimate for $A_c, B_c$ is reduced by a factor of $1/M$ compared to any single parameter set, while for each row of $C_c, D_c$ the reduction factor is $M_r/M$. Such decorrelation may hold approximately because each parameter set is extracted from a distinct cyclic sub-block processing temporally interleaved data, whereas exact statistical independence is not guaranteed since the subspace identification operates on the full cycled signal jointly; the variance-reduction factors should therefore be interpreted as idealized references rather than exact guarantees.

\subsection{Polytopic Uncertainty Model Construction}\label{sec43b}

The same $M$ parameter sets can also be used as vertices of a polytopic uncertainty model, extending the method of~\cite{okajima3} to multirate systems. While the centroid model in Section~\ref{sec43a} provides a single nominal model, the polytopic model captures the dispersion of the parameter estimates and is directly compatible with vertex-based LMI robust control design.

To define well-posed vertex matrices, the row-wise imputation in Section~\ref{sec43a} is first applied: for each output row $r$ and each non-observation phase ($v_r(i) = 0$), the $r$-th rows of $C_{mi}$ and $D_{mi}$ are replaced by the row centroids $[C_c]_{r,:}, [D_c]_{r,:}$ in (\ref{eq:centroid_CD}). With the $M$ completed parameter sets $(A_{mi}, B_{mi}, C_{mi}, D_{mi})$, $i = 0, \ldots, M-1$, fixed as the vertices, the multirate polytopic uncertainty model is obtained by instantiating the convex combinations (\ref{eq:polytope_construction}), (\ref{eq:polytope_construction_cd}) of Section~\ref{sec02} at $N = M$:
\begin{eqnarray}
A(\lambda) &=& \sum_{i=0}^{M-1} \lambda_i A_{mi}, \quad
B(\lambda) = \sum_{i=0}^{M-1} \lambda_i B_{mi}, \nonumber\\
C(\lambda) &=& \sum_{i=0}^{M-1} \lambda_i C_{mi}, \quad
D(\lambda) = \sum_{i=0}^{M-1} \lambda_i D_{mi}, \label{eq:multirate_polytope}
\end{eqnarray}
together with the state-space form (\ref{siki1-jizen2}), (\ref{siki2-jizen2}) and $\lambda \in \Lambda = \{\lambda \in \mathbb{R}^{M} : \lambda_i \ge 0, \sum_{i=0}^{M-1} \lambda_i = 1\}$. A key difference from~\cite{okajima3} is that the number of vertices $M$ is determined by the multirate structure rather than being a free design parameter.

A structural consequence of the row-wise imputation should be noted. For each output row $r$, the vertices differ only through the $M/M_r$ observable-phase estimates, while the imputed rows coincide with the row centroid; the dispersion of $C(\lambda)$ and $D(\lambda)$ along rarely observed output rows is therefore structurally smaller than that of $A(\lambda)$ and $B(\lambda)$. In the extreme case $M_r = M$, where row $r$ is observed at a single phase, the corresponding row is identical at all vertices and carries no dispersion in the polytope. The polytope may thus underrepresent the uncertainty of such output rows, which should be taken into account when it is used for robust control design.

By construction, the polytope (\ref{eq:multirate_polytope}) contains the centroid model as a specific interior point: setting $\lambda_i = 1/M$ for all $i$ recovers $A(\lambda) = A_c$ and $B(\lambda) = B_c$ directly from (\ref{eq:centroid_AB}), and the row-wise imputation ensures that each row of $C(\lambda)$ and $D(\lambda)$ at $\lambda_i = 1/M$ also coincides with $[C_c]_{r,:}$ and $[D_c]_{r,:}$ in (\ref{eq:centroid_CD}). The centroid and polytopic models are therefore consistent within the same recovered realization: the centroid serves as the nominal model at the polytope center, and the surrounding vertices describe the data-derived dispersion around it.

The polytopic representation is compatible with a rich set of LMI-based robust control design methods. Under a common Lyapunov matrix, if LMI conditions for stability or performance are satisfied at each vertex, the same conditions hold for all points inside the polytope by convexity. This compatibility enables direct application of $H_\infty$ control~\cite{robust1}, robust pole placement~\cite{robust3,robust4}, and robust model predictive control~\cite{kothare1996robust}. The effectiveness of LMI-based robust $H_\infty$ design on a polytope constructed from the same cyclic reformulation principle---in the LTI setting---has been numerically investigated through Monte Carlo trials in~\cite{okajima3}, where the resulting controllers stabilized the true plant and achieved robust $H_\infty$ performance with only marginal conservatism relative to a true-plant-fixed baseline. The present paper therefore focuses on the multirate identification side of this pipeline and refers the reader to~\cite{okajima3} for the LMI-based control synthesis stage.

\subsection{System Identification Algorithm}\label{sec43}

Based on the above, the identification algorithm for multirate sensing systems is summarized as Algorithm \ref{algo11}.

\begin{algorithm}[!bth]
\caption{System Identification for Multirate Sensing Systems}
\label{algo11}
\begin{algorithmic}[1]
\State Determine $M$ from each output period $M_i$ of the multirate system and prepare cycled input and cycled output signals from the input-output data obtained from system (\ref{eq:P_multi}), (\ref{eq:P_multi2}).
\State Compute $\A_*$, $\B_*$, $\C_*$, $\D_*$ using an existing subspace identification method with the cycled signals.
\State Choose $G_\nu \in \mathbb{R}^{m \times n}$ for $\nu = 0, \ldots, n-1$ satisfying the rank condition (\ref{rankg}); for single-input systems ($m = 1$), $G_\nu = e_{\nu+1}^T$ (the transpose of the $(\nu+1)$-th standard basis vector of $\mathbb{R}^n$) is a simple choice. Construct $T$ from (\ref{checky3}), verify $\mathrm{rank}(T) = Mn$, and if $T$ is rank-deficient or ill-conditioned, reselect $G_\nu$.
\State Derive cyclic reformulation using the obtained $\A_*$, $\B_*$, $\C_*$, $\D_*$ with the transformation matrix $T$.
\State Extract parameters $A_{mi}, B_{mi}, C_{mi}, D_{mi}$ from the components of the cyclic reformulation $\check A_m, \check B_m, \check C_m, \check D_m$.
\State Compute the centroid model: $A_c, B_c$ by uniform averaging over $i = 0, \ldots, M-1$ as in (\ref{eq:centroid_AB}), and each row of $C_c, D_c$ by averaging only over the observable phases satisfying $v_r(i) = 1$ as in (\ref{eq:centroid_CD}). For each output row $r$, the $r$-th rows of $C_{mi}, D_{mi}$ at non-observation phases are then filled with $[C_c]_{r,:}, [D_c]_{r,:}$ so that all $M$ vertices have complete output matrices for the polytopic model.
\State Construct the polytopic uncertainty model using the $M$ parameter sets as vertices.
\end{algorithmic}
\end{algorithm}

Fig.~\ref{fig:algorithm_overview} provides an overview of Algorithm~\ref{algo11}. The input-output data are first transformed into cycled signals, from which subspace identification yields expanded-dimension parameters. A coordinate transformation then recovers the cyclic reformulation structure, producing $M$ parameter sets. These sets are used both to construct the centroid model by averaging and to define the vertices of a polytopic uncertainty model.

\begin{figure}[!t]
\centering
\includegraphics[width=0.75\textwidth]{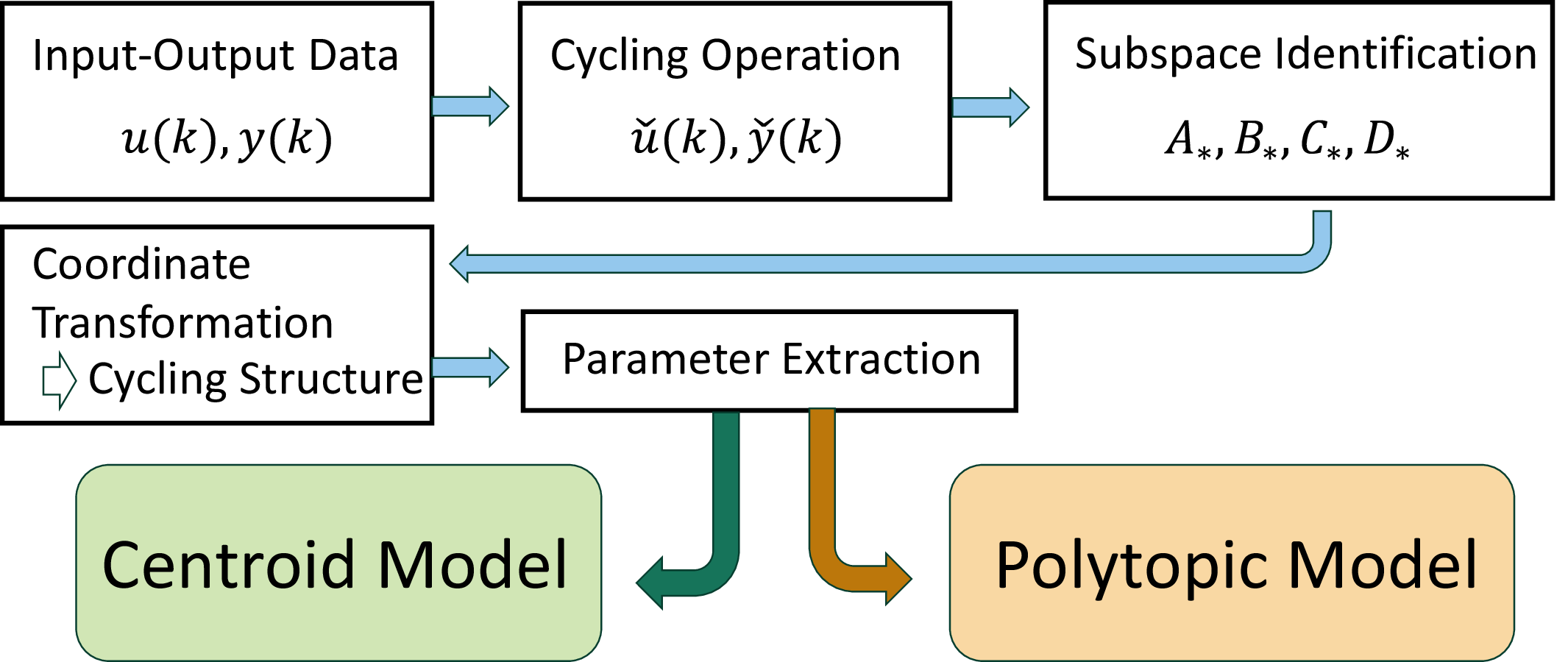}
\caption{Overview of the proposed identification algorithm. The $M$ parameter sets obtained from cyclic reformulation are used to construct both a centroid model and a polytopic uncertainty model.}
\label{fig:algorithm_overview}
\end{figure}

Steps 1--5 solve Problem~\ref{prob1} by extracting $M$ parameter sets from the cyclic reformulation, each estimating the original system matrices $(A, B, C, D)$ up to coordinate transformation. Steps 6 and 7 then produce two complementary models from these parameter sets: the centroid model serves as a noise-reduced nominal estimate of $(A, B, C, D)$, while the polytopic model enables LMI-based robust control synthesis. The computational complexity is dominated by the subspace identification in Step 2, which scales as $O(N_{\text{data}}(Mn)^2 + (Mn)^3)$ in the expanded dimension $Mn$; all other steps involve only matrix manipulations of negligible cost. The relationship to existing approaches is discussed in Section~\ref{sec_disc}.

\section{Numerical Simulation}\label{sec5}

In this section, the proposed framework is evaluated through two complementary numerical examples. Throughout the section, the finite-noise behavior is assessed via output-level validation quantities, which are invariant under similarity transformations. Section~\ref{sec51} examines the centroid nominal model introduced in Section~\ref{sec43a}: a SISO illustrative example is used to compare the centroid against the best individual vertex and an interpolation-based baseline, thereby verifying the noise-reduction effect of the row-wise averaging operation. Section~\ref{sec52} addresses the polytopic uncertainty model of Section~\ref{sec43b}: in the spirit of the LTI counterpart in~\cite{okajima3}, the best in-polytope point $\lambda^*$ is used as a diagnostic indicator of polytope reliability, namely whether the polytope contains models that approximate the true plant adequately for vertex-based LMI synthesis; the centroid's noise-reduction effect is also reconfirmed on this MIMO system. Matrix-coordinate quantities such as vertex-to-vertex Frobenius norms are realization-dependent and are therefore not used as primary quantitative evidence for the polytope.

\subsection{Illustrative Example}\label{sec51}

To illustrate how the proposed method constructs polytopic uncertainty models, we consider a second-order system ($n=2$) with single input ($m=1$) and single output ($l=1$). The true plant parameters used for data generation are
\begin{eqnarray}
A &=& \begin{bmatrix} 0 & 1 \\ -0.5 & 1.2 \end{bmatrix}, \quad
B = \begin{bmatrix} 0 \\ 1 \end{bmatrix}, \nonumber\\
C &=& \begin{bmatrix} 1 & 0 \end{bmatrix}, \quad
D = 0.
\label{ex1_plant}
\end{eqnarray}
The output is observed every $M=3$ steps according to
\begin{eqnarray}
V_0 = 1, \quad V_1 = 0, \quad V_2 = 0,
\label{ex1_V}
\end{eqnarray}
so that measurements are available only at $k=0,3,6,\ldots$, i.e., two-thirds of the outputs are missing. Synthetic input--output data are generated by simulating (\ref{ex1_plant}) under the multirate sampling (\ref{ex1_V}) with Gaussian process and observation noise. The data length is set to $N_{\mathrm{data}} = 1000$, and the noise standard deviation $\sigma$ is varied as specified below.

\subsubsection*{Application of Algorithm \ref{algo11}}

Algorithm~\ref{algo11} is applied to the generated data as follows. Cycled signals $\check u(k) \in \mathbb{R}^{3}$ and $\check y(k) \in \mathbb{R}^{3}$ are constructed from $(u,y)$, and the MOESP method~\cite{verhaegen1992moesp} yields the expanded-dimension parameters $\A_* \in \mathbb{R}^{6 \times 6}$, $\B_* \in \mathbb{R}^{6 \times 3}$, $\C_* \in \mathbb{R}^{3 \times 6}$, $\D_* \in \mathbb{R}^{3 \times 3}$. Since the system is SISO, the auxiliary matrices are chosen as $G_0 = [1\ 0]$ and $G_1 = [0\ 1]$, so that $[G_0^T, G_1^T]^T = I_2$ satisfies the rank condition (\ref{rankg}). The coordinate transformation matrix $T$ in (\ref{checky3}) is constructed and verified to have rank $Mn = 6$, and the three parameter sets $(A_{mi}, B_{mi}, C_{mi}, D_{mi})$ ($i = 0, 1, 2$) are extracted as in (\ref{Amtilde})--(\ref{Dmtilde}).

\subsubsection*{Noise-Free Reference Realization}

Algorithm~\ref{algo11} is first applied to noise-free data. By the noise-free recovery result of~\cite{okajima2}, the three parameter sets coincide up to numerical precision and define a single reference realization $(A_{\mathrm{ref}}, B_{\mathrm{ref}}, C_{\mathrm{ref}}, D_{\mathrm{ref}})$, whose Markov parameters match those of the original plant (\ref{ex1_plant}) so that this realization is input--output equivalent to the true plant (though, due to the coordinate transformation in Section~\ref{sec42}, generally not identical to it). The row of $C_{\mathrm{ref}}$ is identified at the only observation phase ($V_0 = 1$); the row centroid (\ref{eq:centroid_CD}) in this SISO case reduces to this single observable estimate. The reference parameters are
\begin{eqnarray}
A_{\mathrm{ref}} &=& \begin{bmatrix}
0 & -0.5000\\
1.0000 &  1.2000
\end{bmatrix}, \quad
B_{\mathrm{ref}} = \begin{bmatrix}
 1.3518\\
-1.1105
\end{bmatrix}, \nonumber\\
C_{\mathrm{ref}} &=& \begin{bmatrix}
1.7280 & 2.1036
\end{bmatrix}, \quad
D_{\mathrm{ref}} = 0,
\label{ex1_ref}
\end{eqnarray}
where the $(1,1)$ entry of $A_{\mathrm{ref}}$ is numerically zero. Although the matrix entries of (\ref{ex1_ref}) differ from those of the true plant (\ref{ex1_plant}), both realizations have the same input--output transfer function $1/(z^2 - 1.2z + 0.5)$, consistent with similarity equivalence. This reference realization serves as the comparison basis for the noisy results below.

\subsubsection*{Noisy Identification and Parameter Dispersion}

The same procedure is then applied to noisy data ($\sigma = 0.05$). Under noise, the three parameter sets no longer coincide and are interpreted as $(A_{mi}, B_{mi}, C_{mi}, D_{mi}) \approx (A_{\mathrm{ref}}, B_{\mathrm{ref}}, C_{\mathrm{ref}}, D_{\mathrm{ref}}) + \epsilon_i$ in the recovered realization, providing a concrete numerical instance of the perturbation argument in Section~\ref{sec43a}. The resulting vertices are listed in Table~\ref{tab:polytope_vertices}. Compared with the noise-free reference, noticeable variations of several percent appear in $A_{mi}(1,2)$, $A_{mi}(2,2)$, and $B_{mi}$, while the entry $A_{mi}(2,1) \approx 1$ remains nearly invariant; this invariance is a property of the specific coordinate frame obtained by the present algorithm and should not be interpreted as a coordinate-free structural property of the underlying plant. In contrast, the rows of $C_{mi}$ coincide across the three vertices by construction: the output is observed only at phase $i = 0$, and the rows at $i = 1, 2$ are filled by the imputation of Section~\ref{sec43a} with the row centroid (\ref{eq:centroid_CD}), which here reduces to the single phase-0 estimate. Table~\ref{tab:polytope_vertices} therefore lists the completed (post-imputation) vertices. The overall pattern of dispersion reflects how noise perturbs the degrees of freedom available in the identification within this realization.

\begin{table}[h]
\centering
\caption{Polytope vertices (after row-wise imputation) for the illustrative example ($\sigma = 0.05$, $N_{\text{data}} = 1000$)}
\label{tab:polytope_vertices}
\begin{tabular}{ccc}
\hline
Parameter & Reference$^*$ & Identified Vertices $(i=0,1,2)$ \\ \hline
$A_{mi}(1,1)$ & $-0.000$ & $-0.000$, $-0.000$, $0.000$ \\
$A_{mi}(1,2)$ & $-0.500$ & $-0.511$, $-0.520$, $-0.475$ \\
$A_{mi}(2,1)$ & $1.000$ & $1.000$, $1.000$, $1.000$ \\
$A_{mi}(2,2)$ & $1.200$ & $1.201$, $1.232$, $1.177$ \\
$B_{mi}(1)$ & $1.352$ & $1.360$, $1.368$, $1.326$ \\
$B_{mi}(2)$ & $-1.110$ & $-1.109$, $-1.137$, $-1.087$ \\
$C_{mi}(1)$ & $1.728$ & $1.732$, $1.732$, $1.732$ \\
$C_{mi}(2)$ & $2.104$ & $2.102$, $2.102$, $2.102$ \\
\hline
\multicolumn{3}{l}{\footnotesize $^*$Noise-free identification.}
\end{tabular}
\end{table}

\subsubsection*{Comparison with Interpolation-Based Identification}

To quantify the advantage of the cyclic reformulation, the proposed centroid model is compared with a conventional approach that fills missing outputs by linear interpolation and then applies standard MOESP at the original system order $n$. Table~\ref{tab:comparison} reports the FIT metric, defined as $\text{FIT} = 100 \times \left(1 - {\|y - \hat{y}\|}/{\|y - \bar{y}\|}\right)$ [\%], over 10 independent trials at four noise levels. Unless otherwise stated, $y$ in this metric denotes the complete fast-rate output generated by the true system rather than only the observed multirate samples, and $\hat{y}$ is the corresponding fast-rate prediction by the identified model. This convention is used consistently throughout Section~\ref{sec5}.

\begin{table}[h]
\centering
\caption{FIT scores [\%]: proposed method vs.\ linear-interpolation-based MOESP ($n=2$, $M=3$, $N_{\text{data}}=1000$, 10 trials). ``Proposed (best vertex)'' refers to the vertex model among $\{(A_{mi}, B_{mi}, C_{mi}, D_{mi})\}_{i=0}^{M-1}$ achieving the highest FIT on the validation data; this is a diagnostic quantity used for comparison only and cannot be selected in practice without access to validation data.}
\label{tab:comparison}
\begin{tabular}{cccc}
\hline
$\sigma$ & Proposed (centroid) & Proposed (best vertex) & Linear-interp.\ MOESP \\ \hline
$0.001$ & $99.97 \pm 0.01$ & $99.96 \pm 0.02$ & $68.24 \pm 2.80$ \\
$0.010$ & $99.66 \pm 0.16$ & $99.33 \pm 0.50$ & $67.93 \pm 1.23$ \\
$0.050$ & $98.36 \pm 1.09$ & $95.55 \pm 3.12$ & $67.42 \pm 2.48$ \\
$0.100$ & $96.14 \pm 1.45$ & $92.33 \pm 4.40$ & $66.42 \pm 2.53$ \\
\hline
\end{tabular}
\end{table}

The proposed centroid model consistently achieves FIT scores above 96\% across all noise levels, substantially outperforming the interpolation-based approach, which remains around 67--68\% regardless of $\sigma$. This $\sigma$-independent behavior of the interpolation approach indicates that its accuracy is limited not by noise but by systematic modeling error: the interpolated output does not reflect the true system dynamics, introducing a bias that subspace identification cannot remove. Zero-order hold interpolation yields even lower FIT scores (approximately 58\%) and is omitted from the table.

Two further observations deserve emphasis. First, the centroid model outperforms the best individual vertex at every noise level, confirming the noise-reduction effect of averaging the $M$ parameter sets. Second, the performance gap between the proposed method and interpolation-based identification widens as the noise level decreases, since the cyclic reformulation can fully exploit low-noise data while the interpolation error constitutes an irreducible floor.

\subsubsection*{Polytopic Uncertainty Model}

The same three completed parameter sets define the polytopic model:
\begin{eqnarray}
A(\lambda) &=& \sum_{i=0}^{2} \lambda_i A_{mi}, \quad
B(\lambda) = \sum_{i=0}^{2} \lambda_i B_{mi}, \label{ex1_polytope_ab}\\
C(\lambda) &=& \sum_{i=0}^{2} \lambda_i C_{mi}, \quad
D(\lambda) = \sum_{i=0}^{2} \lambda_i D_{mi},
\label{ex1_polytope_cd}
\end{eqnarray}
where $\lambda_0 + \lambda_1 + \lambda_2 = 1$ and $\lambda_i \ge 0$.
For $i=1,2$, where $V_i=0$, the output matrices $C_{mi}$ and $D_{mi}$
are completed by the row-wise imputation described in Section~\ref{sec43a}.
In this SISO example, the output is observed only at phase $i=0$, so the
completed output matrices at the non-observation phases coincide with the
row centroid used for the nominal output matrices. Consequently, all three
vertices share the same $C$ and $D$, and the polytope expresses uncertainty
only in the $A$ and $B$ directions, a concrete instance of the structural
underrepresentation discussed in Section~\ref{sec43b}.

This polytopic representation can be used as a vertex set for LMI-based robust
controller design. Under standard common-Lyapunov-type formulations with
convex LMI conditions and vertex-independent decision variables, stability and
performance conditions need to be checked only at the three vertices. Since
the centroid model corresponds to the convex combination with
$\lambda_i=1/3$ for all $i$, it lies at the geometric center of the completed
polytope, consistent with its role as the nominal model in
Section~\ref{sec43a}.

\subsection{Multirate System with Multiple Sensors}\label{sec52}

This subsection considers a more practical scenario where multiple sensors with different sampling rates are employed. A third-order system ($n = 3$) with two inputs ($m = 2$) and two outputs ($l = 2$) is used, with the following parameters:
\begin{eqnarray}
&A = \begin{bmatrix}0&0&0.8\\1&0&0.5\\0&1&-0.4\end{bmatrix}, B = \begin{bmatrix}1&0\\0&2\\0&1\end{bmatrix}, C = \begin{bmatrix}1&0.5&0.3\\0.1&0.3&0.7\end{bmatrix}, D = \begin{bmatrix}0&0\\0&0\end{bmatrix}\label{rei01}
\end{eqnarray}
The two sensors have different observation periods: $M_1=2$ for $y_1$ and $M_2 = 3$ for $y_2$. From their least common multiple, the system period is $M = 6$. The observation timing matrices $V_i$ $(i=0,\cdots,5)$ are:
\begin{eqnarray}
&V_0 = \mathrm{diag}(1,1),\; V_1 = \mathrm{diag}(0,0),\; V_2 = \mathrm{diag}(1,0), \label{rei02} \\
&V_3 = \mathrm{diag}(0,1),\; V_4 = \mathrm{diag}(1,0),\; V_5 = \mathrm{diag}(0,0) \label{rei03}
\end{eqnarray}
The pair $(A, B)$ is controllable, so that ${\mathbf{rank}}\, {\Psi}_c = 18$ holds. Assumption~\ref{ass2} was verified directly: the observability matrix of $(V_0C, A^6)$ has rank $n = 3$, and hence ${\mathbf{rank}}\, {\Psi}_o = 18$ also holds.

Algorithm~\ref{algo11} is applied as follows. Input $u(k)$, randomly generated at each time step, is applied to the multirate system (\ref{rei01})--(\ref{rei03}) to obtain output $y(k)$. Cycled signals $\check u(k) \in \mathbb{R}^{12}$ and $\check y(k) \in \mathbb{R}^{12}$ are constructed, and the MOESP method yields the expanded-dimension parameters $\A_* \in \mathbb{R}^{18\times 18}$, $\B_* \in \mathbb{R}^{18\times 12}$, $\C_* \in \mathbb{R}^{12\times 18}$, $\D_* \in \mathbb{R}^{12\times 12}$. The auxiliary matrices $G_\nu \in \mathbb{R}^{m \times n} = \mathbb{R}^{2 \times 3}$ are chosen so that the $\nu$-th column of $G_\nu$ has all entries equal to one and the remaining columns are zero, i.e., $G_0 = [\mathbf{1}_2,\mathbf{0},\mathbf{0}]$, $G_1 = [\mathbf{0},\mathbf{1}_2,\mathbf{0}]$, $G_2 = [\mathbf{0},\mathbf{0},\mathbf{1}_2]$. The stacked matrix $[G_0^T, G_1^T, G_2^T]^T \in \mathbb{R}^{6 \times 3}$ has rank $n = 3$ and satisfies the rank condition (\ref{rankg}). The coordinate transformation matrix
\begin{eqnarray}
T = \sum_{i = 0}^{2}\sum_{j = 0}^{5}  \A_*^{6i+j}\B_* \check S_m^{j+1} \check G_{i}  \label{henkanTinv}
\end{eqnarray}
has rank~18, and the six parameter sets $A_{mi}, B_{mi}, C_{mi}, D_{mi}$ $(i = 0, \ldots, 5)$ are extracted from the transformed matrices.

In the noise-free case, all six parameter sets coincide up to numerical precision, confirming the exact recovery established in~\cite{okajima2}. Under noisy conditions, each parameter set is perturbed differently. As noted in Remark~\ref{rem:finiteness}, the coordinate transformation recovers the cyclic structure provided that Assumption~\ref{ass1} holds to sufficient numerical accuracy. For the present example, Appendix~\ref{app:verification} shows that the residual $\rho_{\mathrm{BD}}$ remains below $2 \times 10^{-15}$ across all tested noise levels ($\sigma$ up to $0.1$), indicating that Assumption~\ref{ass1} is supported at machine precision over the entire index range relevant to the construction of $T$. On this basis, the polytopic uncertainty model is constructed from the six parameter sets as described in Section~\ref{sec43b}.

\subsubsection*{Polytope Reliability via the Best In-Polytope Point}

Following the LTI counterpart in~\cite{okajima3}, the reliability of the constructed polytope is assessed through the best in-polytope point
\begin{eqnarray}
\lambda^* = \arg\max_{\lambda \in \Lambda} \min_{1 \le r \le l} \mathrm{FIT}_r(\lambda), \label{eq:bestinpoly}
\end{eqnarray}
where $\mathrm{FIT}_r(\lambda)$ is the validation FIT of the $r$-th output channel of the in-polytope model $(A(\lambda), B(\lambda), C(\lambda), D(\lambda))$ driven by an independent noise-free validation input of length $N_{\mathrm{val}} = 1000$ (the FIT metric is defined in Section~\ref{sec51}), and the minimum over output channels imposes a worst-output (most demanding) criterion. Equation~(\ref{eq:bestinpoly}) is approximated numerically by particle swarm optimization (PSO)~\cite{kennedy1995pso} with population size~30, $100$ iterations, inertia weight linearly decreased from $0.9$ to $0.4$, cognitive and social coefficients $c_1 = c_2 = 2.0$, and simplex constraint enforcement by normalization. We hereafter denote the validation FIT of $(A(\lambda^*), B(\lambda^*), C(\lambda^*), D(\lambda^*))$ as the \emph{best in-polytope FIT}. As discussed in~\cite{okajima3}, $\lambda^*$ requires knowledge of $y_{\mathrm{val}}$ and is therefore used strictly as an offline diagnostic indicator of polytope reliability, not as a deployable estimator. A high best in-polytope FIT indicates that the polytope contains in-polytope models with strong predictive performance, supporting its use as the uncertainty description in vertex-based LMI robust control synthesis.

\subsubsection*{Effect of Noise Level}

To evaluate robustness, 10 independent trials were conducted at 6 noise levels $\sigma \in \{0.001, 0.005, 0.01, 0.02, 0.05, 0.1\}$ with $N_{\text{data}} = 3000$. The same noise level was applied to both process and observation noise, following zero-mean Gaussian distributions. Table~\ref{tab:noise_effect} reports the best in-polytope FIT for each noise level. At low noise levels ($\sigma \leq 0.01$), the best in-polytope FIT exceeds 99.6\%, with near-perfect accuracy above 99.95\% at $\sigma = 0.001$. Even at moderate noise ($\sigma = 0.02$), the values remain above 99.2\%. At high noise levels ($\sigma = 0.1$), the FIT of 97.4\% for $y_1$ and 95.5\% for $y_2$ indicates that the polytope still contains models close to the true plant. On the same MIMO system, the centroid model of Section~\ref{sec43a} is compared with the best single vertex, i.e., the parameter set among the $M$ vertices attaining the largest worst-output validation FIT (a diagnostic that cannot be selected without validation data). Mirroring the SISO result of Section~\ref{sec51}, the centroid outperforms the best vertex at every noise level and on both outputs, with the margin widening as $\sigma$ increases: at $\sigma = 0.1$ the centroid attains $99.3\%$ and $94.0\%$ for $y_1$ and $y_2$, against $92.0\%$ and $91.7\%$ for the best vertex. This confirms that the variance-reduction effect of averaging the $M$ parameter sets, established in the SISO example, carries over to the multirate multi-sensor setting. Because the best in-polytope point optimizes the worst output channel in~(\ref{eq:bestinpoly}), it need not maximize any individual output; on the more frequently observed $y_1$ the fixed centroid can attain a higher FIT than the best in-polytope model, whereas on the bottleneck output $y_2$ the ordering is reversed.

\begin{table}[h]
\centering
\caption{Effect of noise level on the best in-polytope FIT ($M=6$, $N_{\text{data}}=3000$, 10 trials)}
\label{tab:noise_effect}
\begin{tabular}{ccc}
\hline
Noise Level $\sigma$ & FIT $y_1$ [\%] & FIT $y_2$ [\%] \\ \hline
0.001 & 99.98 $\pm$ 0.01 & 99.96 $\pm$ 0.02 \\ 
0.005 & 99.87 $\pm$ 0.06 & 99.75 $\pm$ 0.09 \\ 
0.010 & 99.80 $\pm$ 0.10 & 99.69 $\pm$ 0.17 \\ 
0.020 & 99.52 $\pm$ 0.25 & 99.27 $\pm$ 0.50 \\ 
0.050 & 98.58 $\pm$ 0.95 & 98.07 $\pm$ 1.09 \\ 
0.100 & 97.37 $\pm$ 1.53 & 95.50 $\pm$ 3.48 \\ 
\hline
\end{tabular}
\end{table}

\subsubsection*{Effect of Data Length}

Table~\ref{tab:data_size_effect} reports the best in-polytope FIT for $N_{\text{data}} \in \{500, 1000, 2000, 3000, 5000, 10000\}$ at fixed $\sigma = 0.01$. Even with limited data ($N_{\text{data}} = 500$), the polytope reliability remains high (99.1\% for $y_1$ and 98.6\% for $y_2$). Performance improves with $N_{\text{data}}$ and the gain becomes marginal beyond $N_{\text{data}} \approx 2000$, suggesting that around 300--500 samples per phase are sufficient in this example.

\begin{table}[h]
\centering
\caption{Effect of data length on the best in-polytope FIT ($M=6$, $\sigma=0.01$, 10 trials)}
\label{tab:data_size_effect}
\begin{tabular}{ccc}
\hline
Data Size $N_{\text{data}}$ & FIT $y_1$ [\%] & FIT $y_2$ [\%] \\ \hline
500 & 99.14 $\pm$ 0.41 & 98.64 $\pm$ 0.75 \\ 
1000 & 99.42 $\pm$ 0.33 & 99.17 $\pm$ 0.47 \\ 
2000 & 99.69 $\pm$ 0.12 & 99.62 $\pm$ 0.13 \\ 
3000 & 99.76 $\pm$ 0.13 & 99.64 $\pm$ 0.25 \\ 
5000 & 99.78 $\pm$ 0.12 & 99.71 $\pm$ 0.13 \\ 
10000 & 99.85 $\pm$ 0.06 & 99.71 $\pm$ 0.13 \\ 
\hline
\end{tabular}
\end{table}

\subsubsection*{Performance Across Different Sensor Configurations}

To investigate the applicability of the proposed method to various multirate sensing scenarios, we compare performance across four sensor configurations with different observation period pairs $(M_1, M_2)$. Table~\ref{table:period_effects} shows the best in-polytope FIT for these configurations, where $M = \mathrm{lcm}(M_1, M_2)$ ranges from 4 to 12. The configurations include cases where one period divides the other ($M_1 = 2, M_2 = 4$ and $M_1 = 2, M_2 = 8$) as well as coprime periods ($M_1 = 2, M_2 = 3$ and $M_1 = 3, M_2 = 4$), covering a representative range of practical multirate sensing scenarios. The $(M_1, M_2) = (2, 3)$ row was obtained from a set of trials independent of those in Table~\ref{tab:noise_effect}, so the statistics differ slightly between the two tables.

\begin{table}[h]
\centering
\caption{Best in-polytope FIT across different sensor configurations ($\sigma=0.01$, $N_{\text{data}}=3000$, 10 trials)}
\label{table:period_effects}
\begin{tabular}{cccccc}
\hline
$M_1$ & $M_2$ & $M$ & State Order & FIT $y_1$ [\%] & FIT $y_2$ [\%] \\ \hline
$2$ & $4$ & $4$ & $12$ & $99.80 \pm 0.17$ & $99.64 \pm 0.14$ \\ 
$2$ & $3$ & $6$ & $18$ & $99.92 \pm 0.03$ & $99.42 \pm 0.29$ \\ 
$2$ & $8$ & $8$ & $24$ & $99.90 \pm 0.04$ & $99.65 \pm 0.11$ \\ 
$3$ & $4$ & $12$ & $36$ & $98.96 \pm 0.63$ & $99.78 \pm 0.08$ \\ 
\hline
\end{tabular}
\end{table}

Best in-polytope FIT values above 98.9\% are achieved across all configurations, indicating that the constructed polytope contains models close to the true plant across diverse sensor period combinations. For $M = 12$ (state order~36), the FIT of $y_1$ shows slightly larger variance, which is primarily attributable to the increased expanded state dimension ($Mn = 36$), the largest among the tested configurations. Nevertheless, the polytope reliability indicator remains high even for this demanding configuration.
\section{Discussion}\label{sec_disc}

A distinctive feature of the proposed framework is that it derives both a nominal centroid model and a polytopic uncertainty model from a single non-iterative identification run. EM-based approaches~\cite{xie2013fir,xiong2014multiple,chen2019state} rely on iterative likelihood updates and generally require careful initialization; their convergence behavior can depend on the initialization and noise conditions. Lifting-based methods~\cite{rev1-1,li2001fast} recover fast-rate models but face difficulty in extracting original system parameters due to variable products in the expanded state space. The cyclic reformulation directly yields parameters in the original coordinate system up to similarity transformation, and unlike the method in~\cite{lptv}, no specific periodic input signal is required~\cite{okajima2}. The computational cost is dominated by subspace identification on the cycled signals at $O(N_{\text{data}}(Mn)^2 + (Mn)^3)$, and the centroid/polytope construction is essentially cost-free. A full quantitative comparison with EM-based multirate identification is left for future work.

The polytopic model should be interpreted as a data-derived uncertainty description rather than a guaranteed set-membership bound~\cite{milanese1991optimal}: the vertices are constructed from noise-induced parameter variations under a fixed common state coordinate, and formal inclusion guarantees would require additional assumptions on noise bounds or probabilistic characterizations. In addition, the row-wise imputation makes the dispersion of $C(\lambda)$ and $D(\lambda)$ along rarely observed output rows structurally smaller than that of $A(\lambda)$ and $B(\lambda)$, as discussed in Section~\ref{sec43b}. Under this common-coordinate construction, the polytope is directly usable as an input to vertex-based LMI robust control design~\cite{robust1,robust3,robust4,kothare1996robust}, where verifying conditions at the vertices suffices for all interior points due to convexity. The downstream effectiveness of this LMI-based design on a cyclic-reformulation polytope has been demonstrated, in the LTI setting, in our earlier work~\cite{okajima3}; the present paper supplies the multirate counterpart of the identification stage of that pipeline. The theoretical development assumes that the underlying plant is linear and time-invariant, with the time-varying structure arising solely from multirate sampling; extension to systems with intrinsic parameter variation, and experimental verification on physical multirate sensor systems, remain open problems.

\section{Conclusion}\label{sec6}
The cyclic reformulation-based multirate system identification framework proposed in~\cite{okajima2} achieves exact parameter recovery under noise-free conditions, but the noise-induced dispersion among the $M$ parameter sets had not been systematically exploited. This paper extends that framework to a non-iterative, control-oriented modeling pipeline that remains useful under noisy conditions. The $M$ parameter sets, all expressed in a single common state coordinate obtained from one identification run, are used in two complementary ways: their centroid serves as a noise-reduced nominal model, and their convex hull provides a polytopic uncertainty model directly compatible with vertex-based LMI robust control design. The exact part of the theory is clearly separated from the finite-noise modeling step: Theorem~\ref{theo1} establishes cyclic-structure recovery under Assumption~\ref{ass1} (structural preservation), whereas finite-noise behavior is assessed through the structural residual $\rho_{\mathrm{BD}}$ defined in Appendix~\ref{app:verification} and output-level FIT statistics. Numerical simulations support both models: the illustrative SISO example confirms that the centroid model attains higher validation FIT than the best individual vertex and substantially outperforms an interpolation-based baseline, while the MIMO multirate sensing example confirms, in the spirit of the LTI counterpart~\cite{okajima3}, that the best in-polytope FIT stays around 99\% at moderate noise ($\sigma = 0.01$) across sensor configurations with $M \in \{4,6,8,12\}$, and remains above 95\% on average even at the highest tested noise level ($\sigma = 0.1$, evaluated for the $M=6$ configuration).

Future work includes establishing a formal mathematical proof of the sparse Markov-parameter structure in Assumption~\ref{ass1} and its stability under noise---an open problem also identified in~\cite{okajima2}---together with realization-invariant quantitative diagnostics for empirical vertex dispersion, experimental verification on physical multirate sensor systems, and extension to nonlinear and intrinsically time-varying plants.

\section*{Declaration of competing interest}
The authors declare that they have no known competing financial interests or personal relationships that could have appeared to influence the work reported in this paper.

\section*{Funding}
This work was supported by JSPS KAKENHI Grant Number 26K07555.

\section*{Author contributions}
\textbf{Hiroshi Okajima}: Conceptualization, Methodology, Formal analysis, Software, Validation, Writing -- original draft, Writing -- review \& editing, Supervision.
\textbf{Kakeru Ono}: Software, Validation.

\section*{Data availability}
The data that support the findings of this study were generated by numerical simulation. The simulation parameters are fully described in the article.

\section*{Declaration of Generative AI and AI-assisted technologies in the writing process}
During the preparation of this work the authors used Claude (Anthropic) in order to refine English expressions and improve readability throughout the manuscript. After using this tool, the authors reviewed and edited the content as needed and take full responsibility for the content of the publication.


\appendix
\section{Numerical Verification of Assumption \ref{ass1}}\label{app:verification}

This appendix provides numerical evidence for Assumption~\ref{ass1} using the same system and parameters as Section~\ref{sec52}. To quantify how closely the identified Markov parameters $\check{\Hm}^*(j)$ satisfy the block-diagonal structure of Lemma~\ref{lemma3}, we introduce the normalized off-diagonal-to-diagonal residual
\begin{equation}
\rho_{\text{BD}}(j) := \frac{\|\text{off-diag}(\check{\Hm}^*(j) \check{S}^j_m)\|_F}{\|\text{diag}(\check{\Hm}^*(j) \check{S}^j_m)\|_F},
\end{equation}
where $\text{off-diag}(\cdot)$ and $\text{diag}(\cdot)$ extract the off-diagonal and diagonal block elements, respectively. Under Assumption~\ref{ass1}, $\rho_{\text{BD}}(j) = 0$ holds exactly. Although Assumption~\ref{ass1} involves general index pairs $(i,j)$, the case $i = 0$ suffices for the verification. Indeed, writing $\check{S}_l^i \check{\Hm}^*(i+j) \check{S}_m^j = \check{S}_l^i \bigl(\check{\Hm}^*(i+j) \check{S}_m^{i+j}\bigr) \check{S}_m^{-i}$, the map $E \mapsto \check{S}_l^i E \check{S}_m^{-i}$ merely permutes the block positions cyclically (cf.\ the shift property of $\check{S}_q$ in Section~\ref{sec01}), so that the diagonal-block and off-diagonal-block Frobenius norms are invariant and the residual for any pair $(i,j)$ coincides exactly with $\rho_{\text{BD}}(i+j)$. Moreover, the cyclic-structure component of Assumption~\ref{ass1} requires no separate verification, since $\check{S}_l^i \check{\Hm}^*(i+j) \check{S}_m^{j-1}$ is the product of the above block-diagonal matrix and $\check{S}_m^{-1}$ and is therefore cyclic whenever the block-diagonal property holds. We thus compute $\rho_{\text{BD}}(j)$ for $j = 1, \ldots, Mn = 18$, covering all index pairs with $i + j \leq Mn$, including those directly involved in the construction of the coordinate transformation $T$ in (\ref{checky3}). The verification procedure is as follows: generate input-output data from the multirate system (\ref{rei01})--(\ref{rei03}) with $N_{\text{data}} = 3000$, construct cycled signals, apply MOESP identification, and compute $\rho_{\text{BD}}(j)$.

Table~\ref{tab:verification} shows $\rho_{\text{BD}}(j)$ for $j = 1, \ldots, M$ together with the maximum over $j = 1, \ldots, Mn$ at three noise levels. All values are below $2 \times 10^{-15}$, at the level of machine precision, and are essentially independent of the noise level $\sigma$. This indicates that the block-diagonal property is a structural characteristic of the cycled signal formulation rather than a finite-sample artifact. By the index-reduction argument above, these results cover all index pairs $(i,j)$ with $i+j \leq Mn$, together with the cyclic-structure component implied by the block-diagonal one; only the pairs with $i+j > Mn$, which do not enter the construction of $T$, lie outside the table. Accordingly, the present results constitute numerical evidence supporting Assumption~\ref{ass1} over the entire index range required by the proposed algorithm.

\begin{table}[h]
\centering
\caption{Verification of block diagonal property: $\rho_{\text{BD}}(j)$ for $\check{\Hm}^*(j)\check{S}_m^j$ ($N_{\text{data}}=3000$, $j=1,\ldots,6$ shown; verified for all $j=1,\ldots,Mn=18$)}
\label{tab:verification}
\begin{tabular}{cccccccc}
\hline
$\sigma \backslash j$ & $1$ & $2$ & $3$ & $4$ & $5$ & $6$ & $\max_{1 \leq j \leq 18}$ \\ \hline
$0$ & $8.03\text{e-}16$ & $1.09\text{e-}15$ & $1.15\text{e-}15$ & $9.11\text{e-}16$ & $1.24\text{e-}15$ & $8.96\text{e-}16$ & $1.34\text{e-}15$ \\
$0.01$ & $5.65\text{e-}16$ & $1.23\text{e-}15$ & $1.21\text{e-}15$ & $7.81\text{e-}16$ & $1.36\text{e-}15$ & $9.44\text{e-}16$ & $1.65\text{e-}15$ \\
$0.10$ & $5.61\text{e-}16$ & $1.07\text{e-}15$ & $9.29\text{e-}16$ & $8.01\text{e-}16$ & $1.12\text{e-}15$ & $7.02\text{e-}16$ & $1.25\text{e-}15$ \\
\hline
\end{tabular}
\end{table}
\end{document}